\let\oldcaption\caption
\renewcommand{\caption}{\sffamily \oldcaption}
\newcommand{\A}{\mathcal{A}}
\newcommand{\W}{\mathcal{W}}
\newcommand{\Mz}{\mathcal{M}_Z}
\newcommand{\Mx}{\mathcal{M}_X}
\newcommand{\N}{\mathcal{N}}
\newcommand{\mH}{H}
\newtheorem{theorem}{Theorem}
\newtheorem{lemma}{Lemma}
\newcommand{\tr}{{\rm tr}}
\renewcommand{\S}{\mathcal{S}}
\newcommand{\Q}{\mathcal{Q}}
\newcommand{\SqkM}[1]{\S_{#1}^{\mbox{\tiny $\Q$},*}}
\newcommand{\I}{\mathcal{I}}
\newcommand{\proj}[1]{\ket{#1}\!\!\bra{#1}}
\newcommand{\ketbra}[2]{\ket{#1}\!\!\bra{#2}}
\newcommand{\ExpVal}[1]{\langle{#1}\rangle}
\newcommand{\id}{\mathbb{I}}
\begin{document}
\title{Entanglement structure: entanglement partitioning in multipartite systems and its experimental detection using optimizable witnesses }


\author{He Lu}
\affiliation{Shanghai Branch, National Laboratory for Physical Sciences at Microscale and Department of Modern Physics, University of Science and Technology of China, Shanghai 201315, China}
\affiliation{CAS Center for Excellence and Synergetic Innovation Center in Quantum Information and Quantum Physics,
University of Science and Technology of China, Shanghai 201315, China}
\affiliation{CAS-Alibaba Quantum Computing Laboratory, Shanghai 201315, China}

\author{Qi Zhao}
\affiliation{Center for Quantum Information, Institute for Interdisciplinary Information Sciences, Tsinghua University, Beijing 100084, China}

\author{Zheng-Da Li}
\affiliation{Shanghai Branch, National Laboratory for Physical Sciences at Microscale and Department of Modern Physics, University of Science and Technology of China, Shanghai 201315, China}
\affiliation{CAS Center for Excellence and Synergetic Innovation Center in Quantum Information and Quantum Physics,
University of Science and Technology of China, Shanghai 201315, China}
\affiliation{CAS-Alibaba Quantum Computing Laboratory, Shanghai 201315, China}

\author{Xu-Fei Yin}
\affiliation{Shanghai Branch, National Laboratory for Physical Sciences at Microscale and Department of Modern Physics, University of Science and Technology of China, Shanghai 201315, China}
\affiliation{CAS Center for Excellence and Synergetic Innovation Center in Quantum Information and Quantum Physics,
University of Science and Technology of China, Shanghai 201315, China}
\affiliation{CAS-Alibaba Quantum Computing Laboratory, Shanghai 201315, China}

\author{Xiao Yuan}
\affiliation{Shanghai Branch, National Laboratory for Physical Sciences at Microscale and Department of Modern Physics, University of Science and Technology of China, Shanghai 201315, China}
\affiliation{CAS Center for Excellence and Synergetic Innovation Center in Quantum Information and Quantum Physics,
University of Science and Technology of China, Shanghai 201315, China}

\author{Jui-Chen Hung}
\affiliation{Department of Physics, National Cheng Kung University, Tainan 701, Taiwan}

\author{Luo-Kan Chen}
\affiliation{Shanghai Branch, National Laboratory for Physical Sciences at Microscale and Department of Modern Physics, University of Science and Technology of China, Shanghai 201315, China}
\affiliation{CAS Center for Excellence and Synergetic Innovation Center in Quantum Information and Quantum Physics,
University of Science and Technology of China, Shanghai 201315, China}
\affiliation{CAS-Alibaba Quantum Computing Laboratory, Shanghai 201315, China}

\author{Li Li}
\affiliation{Shanghai Branch, National Laboratory for Physical Sciences at Microscale and Department of Modern Physics, University of Science and Technology of China, Shanghai 201315, China}
\affiliation{CAS Center for Excellence and Synergetic Innovation Center in Quantum Information and Quantum Physics,
University of Science and Technology of China, Shanghai 201315, China}
\affiliation{CAS-Alibaba Quantum Computing Laboratory, Shanghai 201315, China}

\author{Nai-Le Liu}
\affiliation{Shanghai Branch, National Laboratory for Physical Sciences at Microscale and Department of Modern Physics, University of Science and Technology of China, Shanghai 201315, China}
\affiliation{CAS Center for Excellence and Synergetic Innovation Center in Quantum Information and Quantum Physics,
University of Science and Technology of China, Shanghai 201315, China}
\affiliation{CAS-Alibaba Quantum Computing Laboratory, Shanghai 201315, China}

\author{Cheng-Zhi Peng}
\affiliation{Shanghai Branch, National Laboratory for Physical Sciences at Microscale and Department of Modern Physics, University of Science and Technology of China, Shanghai 201315, China}
\affiliation{CAS Center for Excellence and Synergetic Innovation Center in Quantum Information and Quantum Physics,
University of Science and Technology of China, Shanghai 201315, China}
\affiliation{CAS-Alibaba Quantum Computing Laboratory, Shanghai 201315, China}

\author{Yeong-Cherng Liang}
\affiliation{Department of Physics, National Cheng Kung University, Tainan 701, Taiwan}

\author{Xiongfeng Ma}
\affiliation{Center for Quantum Information, Institute for Interdisciplinary Information Sciences, Tsinghua University, Beijing 100084, China}

\author{Yu-Ao Chen}
\affiliation{Shanghai Branch, National Laboratory for Physical Sciences at Microscale and Department of Modern Physics, University of Science and Technology of China, Shanghai 201315, China}
\affiliation{CAS Center for Excellence and Synergetic Innovation Center in Quantum Information and Quantum Physics,
University of Science and Technology of China, Shanghai 201315, China}
\affiliation{CAS-Alibaba Quantum Computing Laboratory, Shanghai 201315, China}

\author{Jian-Wei Pan}
\affiliation{Shanghai Branch, National Laboratory for Physical Sciences at Microscale and Department of Modern Physics, University of Science and Technology of China, Shanghai 201315, China}
\affiliation{CAS Center for Excellence and Synergetic Innovation Center in Quantum Information and Quantum Physics,
University of Science and Technology of China, Shanghai 201315, China}
\affiliation{CAS-Alibaba Quantum Computing Laboratory, Shanghai 201315, China}



\date{\today}





\begin{abstract}
Creating large-scale entanglement lies at the heart of many quantum information processing protocols and the investigation of fundamental physics. For multipartite quantum systems, it is crucial to identify not only the presence of entanglement, but also its detailed structure. This is because in a generic experimental situation with sufficiently many subsystems involved, the production of so-called genuine multipartite entanglement remains a formidable challenge. Consequently, focusing exclusively on the identification of this strongest type of entanglement may result in an {\em all or nothing} situation where some inherently quantum aspects of the resource are overlooked. On the contrary, even if the system is not genuinely multipartite entangled, there may still be many-body entanglement present in the system. An  identification of the entanglement structure may thus provide us with a hint on where imperfections in the setup may occur, as well as where we can identify groups of subsystems that can still exhibit strong quantum-information-processing capabilities. However, there is no known efficient methods to identify the underlying entanglement structure. Here, we propose two complementary families of witnesses for the identification of such structures. They are based, respectively, on the detection of entanglement intactness and entanglement depth, each applicable to an arbitrary number of subsystems and whose evaluation requires only the implementation of solely two local measurements. Our method is also robust against noises and other imperfections, as reflected by our experimental implementation of these tools to verify the entanglement structure of five different eight-photon entangled states. In particular, we demonstrate how their entanglement structure can be precisely and systematically inferred from the experimental measurement of these witnesses. In achieving this goal, we also illustrate how the same set of data can be classically postprocessed to learn the most about the measured system.
\end{abstract}
\maketitle

\section{Introduction}\label{Sec:Introduction}

Entanglement~\cite{Horodecki09}, one of the defining features offered by quantum theory, is known to be an essential resource in many quantum information processing tasks, including quantum computing~\cite{Nielsen:2011:QCQ:1972505}, quantum cryptography~\cite{bb84, PhysRevLett.67.661}, quantum teleportation~\cite{Teleportation93}, and the reduction of communication complexity~\cite{Buhrman:RMP} via Bell nonlocality~\cite{bell}. In the last decade, tremendous progress has been achieved in the experimental manipulation of small-scale multipartite entanglement using various physical systems~\cite{PhysRevLett.106.130506, PhysRevLett.117.210502, Lucke14prl, chen2017observation, song201710}. Indeed, a long-term goal of quantum technology is to generate medium- and eventually large-scale quantum entanglement that realizes various quantum information processing tasks.

Along this spirit, several experiments have investigated entanglement in large-scale quantum systems involving hundreds (or more) atoms~\cite{Lucke14prl, takei2016direct, Schmied2016,luo2017deterministic} or trapped ions~\cite{britton2012engineered}. However, experimentally producing large-scale genuine multipartite entanglement remains a formidable challenge owing to inevitable couplings to the environment. Consequently, an experimentally prepared $n$-partite state (for large enough $n$) typically contains only fewer-body entanglements that are segregated. To benchmark our technological progress towards the generation of large-scale genuine multipartite entanglement, it is thus essential to determine the corresponding entanglement depth~\cite{Sorensen2001}, i.e., the extent to which the prepared state is many-body entangled. Likewise, to overcome imperfections in the preparation procedure, it would be crucial to identify the extent to which the entanglements produced are segregated, as captured by the nonseparability~\cite{Horodecki09} of the state.

The identification of such entanglement structures is generally challenging, especially when full state reconstruction is infeasible. Still, the experimental preparation of a quantum resource generally follows some well-defined procedure with a well-defined target quantum state in mind. Moreover, even in the presence of experimental imperfections, such {\em a priori} knowledge of what to expect from the setup generally remains relevant. In this case, generalized entanglement witnesses (EW)  \cite{terhal2001family, guhne2009entanglement} serve as powerful alternatives for retrieving information about the underlying entanglement structure. In general,  the experimental evaluation of an EW may require the measurement of local observables that depends on the number of subsystems involved. Nevertheless, entanglement can be witnessed by a constant number of local observables~\cite{Toth2005,Knips2016}, with {\em two} being the minimum since it is impossible to distinguish entangled states from fully separable states with only one local observable. Also worth noting is the fact that the majority of the theoretical tools developed for multipartite entanglement detection~\cite{Guehne2009}  have focused exclusively on the identification of genuine multipartite entanglement, thus rendering them irrelevant in identifying the subtle entanglement structure mentioned above.

In this work, we propose two families of EWs that can respectively certify the maximum number of segregations and the minimal extent of many-body entanglement present in an $n$-qubit system, each by the measurement of \emph{solely} two local observables, i.e., the minimal possible in order to make any nontrivial conclusion. Importantly, each family of witnesses involve the same local measurement regardless of the number of subsystems present. They also do not depend on the extent of nonseparability or entanglement depth to be certified --- this follows directly from the extent to which the respective witnesses are violated. As an illustration of how these witness fare in practice, we experimentally prepare several 8-photon quantum states and demonstrate how the measurement of these two families of EWs---which involves altogether the measurement of four distinct local observables---enable us to infer nontrivial information about the underlying entanglement structure.

\section{Entanglement structure}\label{Structure}

Let $\ket{\phi}=\bigotimes_{i=1}^{m} \ket{\psi_{\mathcal{G}_i}}$ be a quantum state of $n$ parties (subsystems) divided into $m$ disjoint subsets $\{\mathcal{G}_i\}_{i=1,\ldots,m}$, each of which is described by the genuinely multipartite entangled state $\ket{\psi_{\mathcal{G}_i}}$. We say that $\{\mathcal{G}_i\}_{i=1,\ldots,m}$ fully specifies the entanglement structure of $\ket{\phi}$ as it identifies exactly all the entangled subsystems in the composite system. A partial specification of the entanglement structure can be achieved via its {\em separability}. Specifically, $\ket{\phi}$ is said to be $m$-separable~\cite{Horodecki09} ($2\le m\le n$) as it can be written as the tensor product of a pure state $\ket{\psi_{\mathcal{G}_i}}$ from $m$ disjoint subsets. The $m$-separability of a quantum state captures the notion of segregation, i.e., no physical interaction between any two subsystems from disjoint subsets is needed for the generation of $\ket{\phi}$. The larger the value of $m$, the more segregated  $\ket{\phi}$ is. Conversely, the certification that a state is non-$m$-separable implies  that $\ket{\phi}$ cannot be generated by segregating the subsystems into $m$ disjoint subsets and allowing arbitrary manipulations within each subset.

While the (non)-$m$-separability of $\ket{\phi}$ already provides us with important information about the entanglement structure of $\ket{\phi}$, it is not specifically meant to indicate the extent of many-body entanglement present in the system. {To see this, note, for example, that the four-qubit states $\ket{\chi}_{\tiny ABC}\otimes\ket{\zeta}_{D}$ and $\ket{\eta}_{\tiny AB}\otimes\ket{\tau}_{CD}$ are both 2-separable, but the generation of the former may require three-body entanglement while the latter only require up to two-body entanglement. To this end, let} us denote by  $n_i$ the number of subsystems involved in the subset $\mathcal{G}_i$ (note that $\sum_{i=1}^m n_i= n$). Then $\ket{\phi}$ is said to be $k$-producible~\cite{Guhne2005} if the largest constituent of $\ket{\phi}$ involves at most $k$ parties, i.e., if $\max_i n_i \le k$. In other words, a $k$-producible state requires at most $k$-body entanglement in its generation. Thus, the certification that a state is not $k$-producible implies that a higher level of many-body entanglement is required in its generation.

The $m$-separability and $k$-producibility of a general mixed state $\rho$ can be defined analogously: $\rho$ is $m$-separable (or $k$-producible) if it admits a convex decomposition in terms of $m$-separable ($k$-producible) pure states. Following Ref.~\cite{Sorensen2001}, we say that $\rho$ has an entanglement depth of $k$ if it is $k$-producible but not $(k-1)$-producible.  On the other hand, we say that a quantum state $\rho$ has an entanglement intactness of $m$ if it is $m$-separable but not $(m+1)$-separable. A genuinely $n$-partite entangled has an entanglement intactness (depth) of 1 ($n$), whereas a fully separable $n$-partite state has an entanglement intactness (depth) of $n$ (1). In particular, any quantum state that has an entanglement depth greater than $2$ is conventionally said to contain multipartite (many-body) entanglement.

\begin{figure*}[t!]
\centering
\includegraphics[width=2\columnwidth]{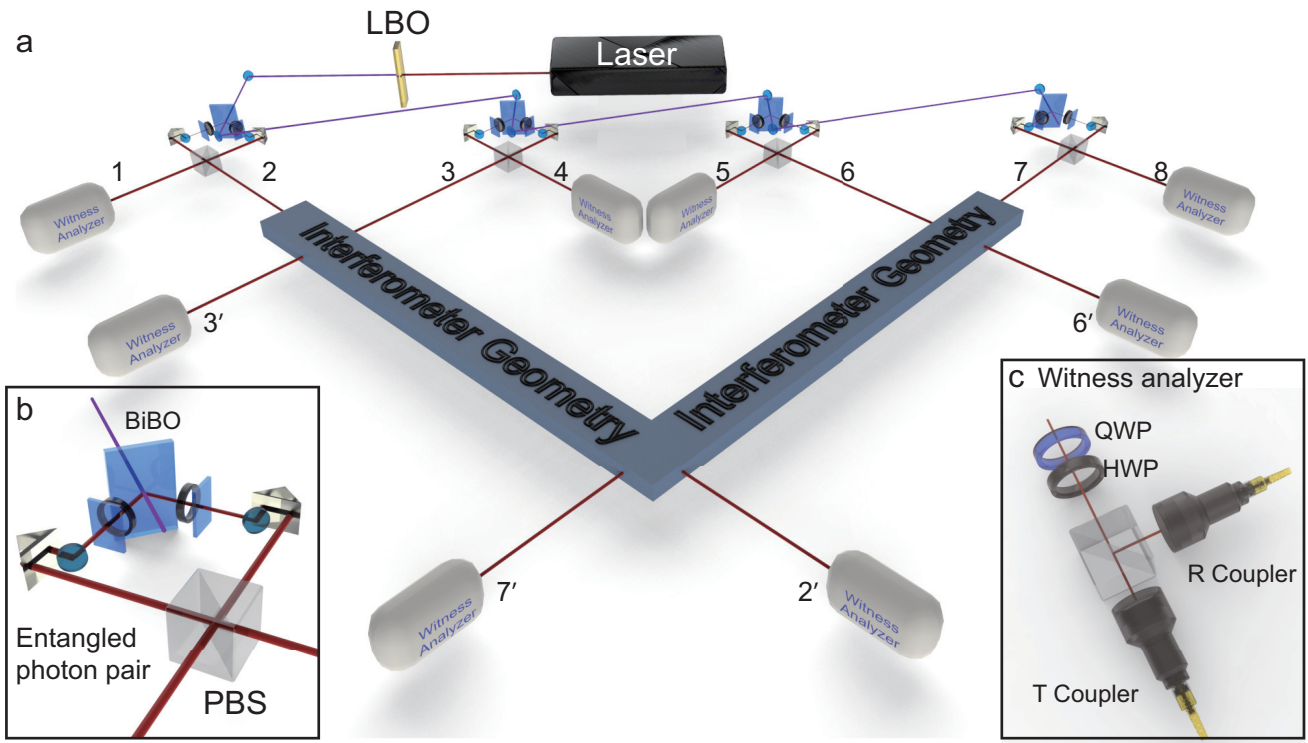}
\caption{\textbf{Schematic showing the experimental  setup. (a) }  A global view of the experimental setup used to generate different eight-photon entangled states. A pulse from a pulsed Ti-sapphire laser (with a central wavelength of 780~nm, a duration of 130~fs, and an average power of 3.5~W) passes through a frequency doubler, by which it is changed to  a UV pulse with  a central wavelength of 390 nm and  an average power of 1.3~W. Then, the UV pulse is directed by reflective mirrors to shine on four BiBO crystals successively. Shining  a UV pulse on  a BiBO crystal will generate  (probabilistically) a photon pair maximally entangled in the polarization degree of freedom via spontaneous parametric down-conversion (SPDC).  Photons in path modes 2, 3, 6, and 7 are  then injected into an interferometric network to generate $\ket{G_8}$, $\ket{G_{62}}$, $\ket{G_{44}}$, $\ket{G_{422}}$ and $\ket{G_{2222}}$ by the corresponding interferometric geometry settings. Finally, eight photons are analyzed by witness analyzers  via single-photon detectors. To suppress  the higher-order emission in SPDC, we attenuate the average power of the UV pulse to 500~mW. The eight-fold coincidences we observed in creating $\ket{G_8}$, $\ket{G_{62}}$, $\ket{G_{44}}$, $\ket{G_{422}}$ and $\ket{G_{2222}}$ are 8/h, 20/h, 20/h, 36/h and 70/h, respectively due to different postselection probabilities (see Fig.~\ref{fig:geometry} for further explanation and Appendix~\ref{sourceappendix} for the actual number of eight-fold coincidences registered in each case.  \textbf{(b) }the experimental setup  used to generate maximally entangled photon pairs. More details concerning  the generation of entangled photons can be found in Appendix~\ref{sourceappendix}. \textbf{(c) }the witness analyzer. An arbitrary observable $\mathcal O$ can be represented as $\mathcal O=\ket{i_{+1}}\bra{i_{+1}}-\ket{i_{-1}}\bra{i_{-1}}$, where $\ket{i_{\pm1}}$ is the eigenstate of $\mathcal O$ with eigenvalue of $\pm1$. The combination of quarter-  (QWP) and half- (HWP) waveplates as well as a polarization beam splitter (PBS) makes $\ket{i_{+1}}$ click on the transmissive detector and $\ket{i_{-1}}$ click on the reflective detector. }
\label{fig:setup}
\end{figure*}

We are now in the position to introduce our witnesses for entanglement intactness. To certify the nonseparability and hence an upper bound on the entanglement intactness of a given quantum state, we introduce the following 2-parameter family of two-observable witnesses:
\begin{equation}\label{eq:separability}
\W^n_{se}(\alpha)=\alpha \Mz+\Mx\!\!\stackrel{m-\text{sep.}}{\le}\!  {\id_n}\, \max\{\alpha, \tfrac{\alpha}{2^{m-1}}+1\},
\end{equation}
where $\alpha\in(0, 2]$ is a free parameter, $\Mz=\left(\proj{0}\right)^{\otimes{n}}+\left(\proj{1}\right)^{\otimes{n}}$ and $\Mx=\sigma_x^{\otimes{n}}$ are $n$-qubit observables, $\sigma_x=\ketbra{0}{1}+\ketbra{1}{0}$ is the Pauli $x$ matrix, $\{\ket{0},\ket{1}\}$ are the computational basis states, $\id_n$ is the $2^n\times 2^n$ identity matrix, and $m$-sep.~in Eq.~\eqref{eq:separability} signifies that the inequality holds true at the level of the expectation value for all $m$-separable states. In other words, for an arbitrary $n$-partite state $\rho$, if $\langle \W^n_{se}(\alpha) \rangle_\rho>\max\{\alpha, \frac{\alpha}{2^{m-1}}+1\}$, one certifies that $\rho$ has an entanglement intactness of $m-1$ or lower. To ease notation, we  abbreviate $\W^n_{se}(\alpha=2)$ as $\W_n$.

For witnessing entanglement depth, inspired by Ref.~\cite{PhysRevLett.114.190401}, we introduce the following family of witnesses, which also involve only two local measurements
[but (possibly) in a basis different from those of $\W^n_{se}(\alpha)$]:
\begin{equation}\label{Eq:depth}
	\W^n_{de}(\gamma)=\gamma\kappa^n\A-\A' \stackrel{k-\text{prod.}}{\le}   {\id_n}\,\beta_{n, k}(\gamma)
\end{equation}
where $\gamma\in(0,2]$ is a free parameter,  $\A = (\frac{\A_-+\A_+}{2\kappa})^{\otimes n}$, and $\A'=(\A_+)^{\otimes n}$ are $n$-partite $\pm1$-valued observables, $\A_{\pm}$ is a single-partite $\pm1$-valued observable, $\kappa\neq0$ (which holds for $\A_+\neq \A_-$) is a normalization constant, and $\beta_{n, k}(\gamma)$ is the $k$-producible bound of the $n$-partite version of the witness $\W^n_{de}(\gamma)$. In Eq.~\eqref{Eq:depth}, $k$-prod. signifies that the inequality holds true at the level of the expectation value for all $n$-partite $k$-producible states. In other words, for an arbitrary $n$-partite state $\rho$, if $\langle \W^n_{de}(\gamma) \rangle_\rho> \beta_{n,k}(\gamma)$, one certifies that $\rho$ has an entanglement depth of at least $k+1$. For $\gamma=2$ and if no further assumption (including the underlying Hilbert space dimension) is made on $\A_\pm$, it follows from Ref.~\cite{PhysRevLett.114.190401} that $\beta_{n, 1}=1$, $\beta_{n,2}=\sqrt{2}$, $\ldots$ for all $n\ge 2$. For specific choices of qubit observables $\A_{\pm}$, these bounds can be tightened to provide better noise robustness (see Sec.~\ref{Sec:Exp}, specifically our elaboration in page~\pageref{Pg:Specific}). Despite their simplistic form, the derivation of the bounds for the two families of witnesses is highly nontrivial and may serve as a basis for the derivation of other entanglement witnesses. For details, see Appendix~\ref{theoryappendix}.

A few other remarks are now in order. First, in contrast with ordinary entanglement witnesses, we see from Eqs.~\eqref{eq:separability} and~\eqref{Eq:depth} that the measured value for these witnesses is precisely the information that we need to provide further details about the underlying entanglement structure. Moreover, both families of witnesses involve a free positive parameter that may be optimized {\em a posteriori} to identify the best possible  upper (lower) bound on the entanglement intactness (depth) of $\rho$. Finally, it is worth noting that these witnesses can be easily adapted to an arbitrary choice of local basis, i.e., even after applying an arbitrary local unitary transformation on each qubit, the $m$-separable bounds and the $k$-producible bounds of the transformed witnesses evidently remain unchanged. For example, via the local unitary transformation $\id_2^{\otimes n-1}\otimes\sigma_z$, the witness $\W^{n}_{se}(\alpha)$ gets transformed to $\W^{n'}_{se}(\alpha)=\alpha \Mz-\Mx$, which complements $\W^{n}_{se}(\alpha)$ in detecting the many-body entanglement present in a larger set of quantum states, see Appendix~\ref{appendixseparablity} for details.\footnote{To decide if such a pretransformation is relevant, one may want to allow the measurement of an additional local observable, such as $\sigma_y$, by one of the parties (or an appropriate modification thereof) before measuring the witness operator itself.}

\section{Experimental realization}\label{Sec:Exp}

\begin{figure*}[t!]
\centering
\includegraphics[width=2\columnwidth]{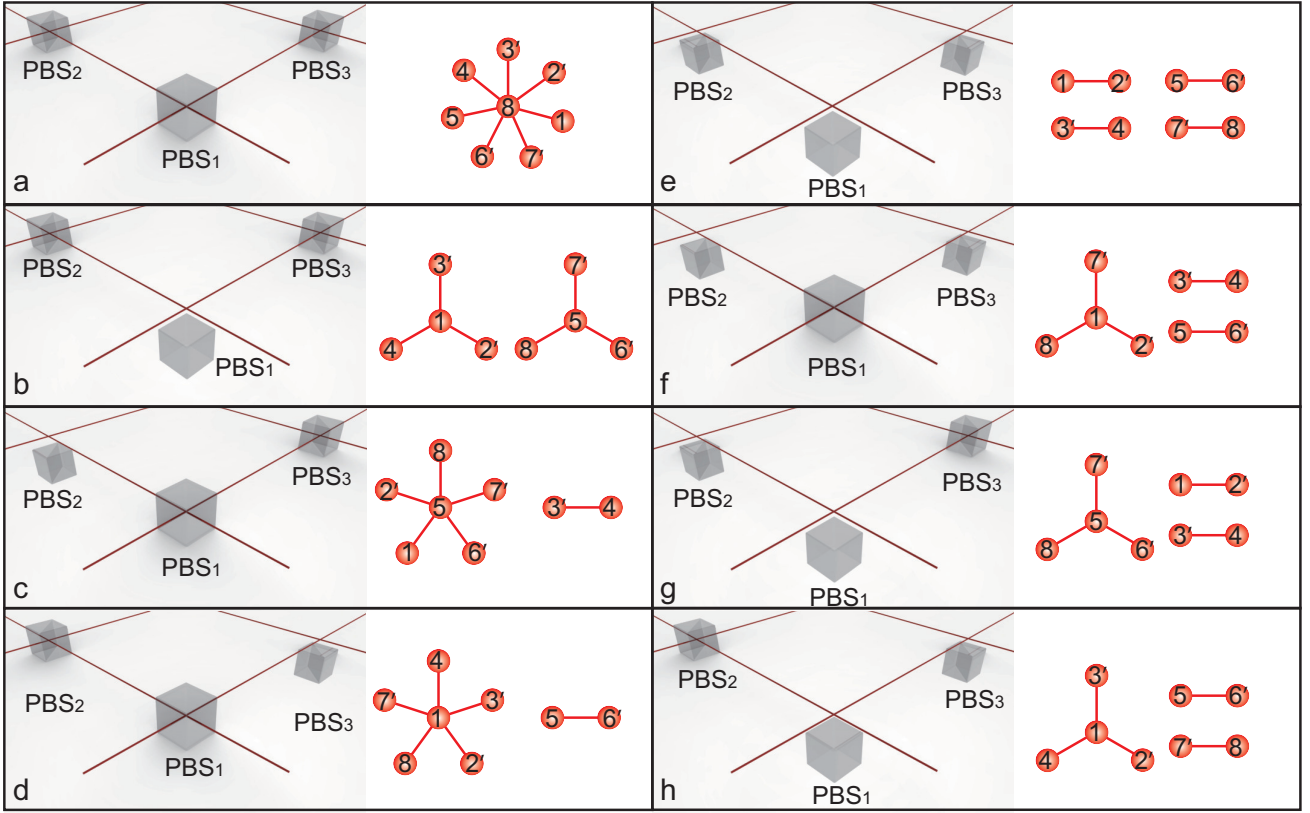}
\caption{\textbf{Interferometric geometries leadings to different entanglements.}
Here, $\ket{\textrm{GHZ}_n}$ is an example of a graph state and can  be represented by a star graph. Such a representation makes its preparation procedure evident: Each node represents a photon prepared in the state $\ket{+}=\tfrac{1}{\sqrt{2}}(\ket{H}+\ket{V})$ and each edge joining two nodes represents a controlled-$Z$ operation performed between the corresponding photons. The number inscribed in each node labels the path mode of the photons prepared in our experiment. The PBS is fixed on a lifting platform. By adjusting the lift height, we can switch the state of the PBS between \emph{up} and
\emph{down}. {For each PBS set to \emph{up}, since we postselect the cases where photons exit from both output ports of the PBS, the count rate reduces approximately by half compared with the case when the PBS is set to \emph{down}.}}
\label{fig:geometry}
\end{figure*}

Experimentally, we use the polarization degree of freedom to encode the qubit state $\ket{H(V)}=\ket{0(1)}$, where $H(V)$ denotes the horizontal (vertical) polarization. The experimental setup used to generate  various entangled states is shown in Fig.~\ref{fig:setup}(a). We first generate four pairs of maximally entangled states $1/\sqrt{2}(\ket{H_iH_j}+\ket{V_iV_j})$ by shining an ultraviolet (UV) pulse successively on four BiB$_{3}$O$_{6}$ crystals [as shown in Fig.~\ref{fig:setup}(b)] with $i$, $j$ denoting the path modes (more details are shown in Fig.~\ref{fig:setup_details} of Appendix~\ref{sourceappendix}).  Photons in path modes 2, 3, 6, and 7 are then injected into an interferometric network (IN), which consists of three  polarization beam splitters (PBSs)  with four input and output ports (as shown in Fig.~\ref{fig:geometry}).  Each PBS is controlled by an individual lifting platform that can be set to either the \emph{up} or \emph{down} state. When  a PBS is in the \emph{up} state, it facilitates the interference of the two photons arriving at its two input ports. On the contrary, there is no interference between the incoming photons when the PBS is  in the \emph{down} state. With three independently controlled PBSs, one can construct eight interferometric geometries, which correspondingly lead to eight possible photonic entangled states that fall under five distinct entanglement structures.

Let $\ket{\textrm{GHZ}_n}=\frac{1}{\sqrt{2}}(\ket{H}^{\otimes n}+\ket{V}^{\otimes n})$ denote an $n$-photon Greenberger-Horne-Zeilinger (GHZ) state. Then, in the absence of imperfection, the interferometer can thus be used to produce the five different entangled states (one from each entanglement structure): $\ket{G_8}, \ket{G_{62}}, \ket{G_{44}}, \ket{G_{422}}, \ket{G_{2222}}$ where the subscripts  $i_1\ldots i_m$ of $\ket{G_{i_1\ldots i_m}}=\bigotimes_{j=1}^m \ket{\textrm{GHZ}_{i_j}}$ are used to label the entanglement structure. For example, when the  states of $\text{PBS}_1$, $\text{PBS}_2$, and $\text{PBS}_3$  are set, respectively, to \emph{up}, \emph{up} and \emph{down}, the  corresponding interferometric geometry is depicted in Fig.~\ref{fig:geometry}(d).  The interaction of photons 2 and 3 with $\text{PBS}_2$ leads  to the state $\ket{\textrm{GHZ}_4}=\frac{1}{\sqrt{2}}(\ket{H}^{\otimes 4}+\ket{V}^{\otimes 4})_{12^{\prime}3^{\prime}4}$. On the other hand, since $\text{PBS}_3$ plays no role in the path of photon 6 and photon 7,  the  outgoing state of photons 5-6$^{\prime}$-7$^{\prime}$-8 is a  tensor product of $\ket{\textrm{GHZ}_2}_{56^{\prime}}$ and $\ket{\textrm{GHZ}_2}_{7^{\prime}8}$. Finally, the interaction at $\text{PBS}_1$  by the incoming photon at modes 2$^{\prime}$ and 7$^{\prime}$ leads to an entanglement in the form of $\ket{G_{62}}=\frac{1}{2}(\ket{H}^{\otimes 6}+\ket{V}^{\otimes 6})_{12^{\prime}3^{\prime}47^{\prime}8}\otimes (\ket{H}^{\otimes 2}+\ket{V}^{\otimes 2})_{56^{\prime}}$. More details of the state preparation procedure are shown in Appendix~\ref{experimentappendix}.

Note that  the geometries depicted in Fig.~\ref{fig:geometry}(c) and Fig.~\ref{fig:geometry}(d) produce essentially the same entangled state as $\ket{G_{62}}$ but differ in their path mode. Similarly, the entanglement produced in Fig.~\ref{fig:geometry}(f), Fig.~\ref{fig:geometry}(g) and Fig.~\ref{fig:geometry}(h)  is essentially the same as that of $\ket{G_{422}}$. In our experiment, we choose the geometries in Figs.~\ref{fig:geometry}(a)-\ref{fig:geometry}(c), ~\ref{fig:geometry}(e) and~\ref{fig:geometry}(h) for the preparation of  five different entangled states. The generated 8-photon entanglement is detected and analyzed by eight witness analyzers in paths 1, 2$^{\prime}$, 3$^{\prime}$, 4, 5, 6$^{\prime}$, 7$^{\prime}$, and 8. As shown in Fig.~\ref{fig:setup}(c), a witness analyzer consists of a quarter-wave plate (QWP), a half-wave plate (HWP), a PBS and two single-photon detectors.

In reality, there are always imperfections in the setup, and the entangled state produced is thus more aptly described by a density matrix $\rho$. For ease of comparison, in a setup used to produce the quantum state $\ket{G_i}$, we shall denote the actual quantum state produced by $\rho_i$. To determine the entanglement structure of $\rho_i$, we thus begin by measuring the expectation value of the observables $\mathcal M_X= {\sigma_x}^{\otimes 8}$ and $\mathcal M_Z= \left(\proj{H}\right)^{\otimes 8}+\left(\proj{V}\right)^{\otimes 8}$.  Whenever the observed $\langle \mathcal M_Z\rangle$ and $\langle \mathcal M_X\rangle$ violate the inequality [corresponding to Eq.~\eqref{eq:separability} with $\alpha=2$]
\begin{equation}\label{eq:genuinewitness}
	2\langle \mathcal M_Z\rangle + \langle \mathcal M_X\rangle  \stackrel{\text{2-sep.}}{\le} 2,
\end{equation}
we can thus conclude that $\rho_i$ exhibits genuine eight-photon entanglement. As shown in Fig.~\ref{fig:data}(a), we observe that ($\langle \mathcal M_Z\rangle, \langle \mathcal M_X\rangle)=(0.80(2), 0.63(4)$) on  $\rho_8$ which violates Eq.~\eqref{eq:genuinewitness}, while the corresponding expectation values for $\rho_8$, $\rho_{62}$, $\rho_{44}$, $\rho_{422}$, $\rho_{2222}$, as summarized in Table~\ref{Tbl:SepRes},  satisfy Eq.~\eqref{eq:genuinewitness}. The results indicate that $\rho_8$ is genuinely eight-photon entangled, but the entanglement structure of the rest
cannot be concluded from the witness of Eq.~\eqref{eq:genuinewitness}.

\begin{figure*}[t!]
\centering
\includegraphics[width=2\columnwidth]{combineV5}
\caption{\textbf{Experimental results for witnesses certifying genuine multipartite entanglement (GME), entanglement intactness and entanglement depth. (a)} GME revealed by the measurement of the witness of Eq.~\eqref{eq:genuinewitness} via $\ExpVal{\Mz}$ and $\ExpVal{\Mx}$. \textbf{(b)} Entanglement intactness revealed by the measurement of $\ExpVal{\W_8^{se}(\alpha)}$ via $\ExpVal{\Mz}$ and $\ExpVal{\Mx}$ for judicious choice of $\alpha$. \textbf{(c)} Entanglement depth revealed by the measurement of $\ExpVal{\W_8^{de}(\gamma)}$ via $\ExpVal{\A}$ and $\ExpVal{\A'}$ for judicious choice of $\gamma$. For comparison, we have also included here with the red (blue) dot the theoretical value of the witness for an ideal $\ket{G_{53}}$ ($\ket{G_{71}}$) assuming the same set of measurements.}
\label{fig:data}
\end{figure*}

\begin{table}[h!]
\centering
\begin{tabular}{|c|c|c|c|c|c|c|}
\hline
 State & $m$ & $\alpha$ & $\ExpVal{\Mz}$ & $\ExpVal{\Mx}$ & $\ExpVal{\W_{se}^{8}\left(\alpha\right)}$ & Ent. Intactness
\\ \hline
$\rho_8$ & 2 & 2 & 0.80(2) &  0.63(4) & 2.23 (3) & 1\\
$\rho_{62}$ & 3 & 4/3 & 0.63(3) & 0.60(5) & 1.43(7) & $\le 2$\\
$\rho_{44}$ & 3 & 4/3 & 0.43(4) & 0.89(3) & 1.46(6) & $\le 2$ \\
$\rho_{422}$ & 4 & 8/7 & 0.27(3) & 0.86(3) & 1.17(5) & $\le 3$\\
$\rho_{2222}$ & 5 & 16/15 & 0.18(2) &  0.91(2) & 1.09(3) & $\le 4$ \\ \hline
$\ket{G_{71}}$  & 3 & 4/3 & 0.5 &  1 & 5/3 & $\le 2$ \\
$\ket{G_{53}}$ & 3 & 4/3 & 0.5 &  1 & 5/3 & $\le 2$ \\
\hline
\end{tabular}
\caption{Summary of our experimental results for determining the entanglement intactness of the prepared state $\rho_i$. The second and the third column give our choice of the free parameters $m$, $\alpha$ for the witness described in Eq.~\eqref{eq:genuinewitness}. The experimentally measured expectation values $\ExpVal{\Mz}$, $\ExpVal{\Mx}$ and  $\ExpVal{\W_{se}^{8}\left(\alpha\right)}=\alpha\ExpVal{\Mz}+\ExpVal{\Mx}$ are given in the next three columns. The last column gives our best upper bound on the entanglement intactness of $\rho_i$ based on the measured value of $\W_{se}^{8}\left(\alpha\right)$.  An entanglement intactness of $m$ means that the state cannot be produced by segregating the subsystems into $m+1$ or more groups. The errors are deduced from propagated Poissonian counting statistics of the raw photon detection events (see Appendix~\ref{sourceappendix} for raw data). For comparison, we have also included here the theoretical values of the witness for an ideal $\ket{G_{53}}$ and $\ket{G_{71}}$ assuming the same set of measurements.\label{Tbl:SepRes}}
\end{table}

Note, however, that Eq.~\eqref{eq:genuinewitness} only represents a specific case  ($\alpha=m=2$) of the family of witnesses considered in Eq.~\eqref{eq:separability}.  Further nontrivial information on the entanglement structure, specifically the $m$-separability of $\rho_i$ can also be deduced from the measured value of $\Mz$ and $\Mx$. Specifically, by varying $\alpha\in(0,2]$, one can identify the smallest value of $m=2,3,\ldots,8$ whereby the witnesses of Eq.~\eqref{eq:separability} are violated;\footnote{Of course, for a very poorly prepared system, it could happen that for all $2\le m\le n$, none of the witnesses from Eq.~\eqref{eq:separability} is violated.} this minimum value of $m$ then provides an upper bound of $m-1$ on the entanglement intactness of the measure system. To this end, it is worth noting that both $\W_{se}^{8}\left(\alpha\right)$, $\W_{se}^{8'}\left(\alpha\right)$ and their $m$-separable bounds are linear in $\alpha$. For any given value of $m$, the optimal choice of the free parameter $\alpha$ in Eq.~\eqref{eq:separability} is obtained by setting $\alpha=\tfrac{\alpha}{2^{m-1}}+1$, thereby giving $\alpha=\tfrac{2^{m-1}}{2^{m-1}-1}$, e.g., $\alpha=2, \tfrac{4}{3}, \tfrac{8}{7},$ and $\tfrac{16}{15}$, respectively, for $m=2,3,4,$ and 5. Note that in each of these cases, the value of $\alpha$ is precisely the $m$-separable bound given in Eq.~\eqref{eq:separability}. A direct comparison between the measured value of $\W_{se}^{8}\left(\alpha\right)$ and the various $m$-separable bounds then allows us to determine an upper bound on the entanglement intactness of the measured system.

A graphical illustration of these fine-tuned $m$-separability witnesses pcorresponding to the blue region in Fig.~\ref{fig:data}(a)] is shown in Fig.~\ref{fig:data}(b).
Once the observed expectation values $(\langle\mathcal M_Z\rangle, \langle\mathcal M_X\rangle)$  of $\rho_i$ are found to lie in the $(m-1)$-separable region, it violates the  witness of Eq.~\eqref{eq:separability} for  $m$-separability, thereby certifying that $\rho_i$ has an entanglement intactness of $m-1$ or lower. Equivalently, we see from Table~\ref{Tbl:SepRes} that the entanglement intactness of $\rho_{62}$, $\rho_{44}$, $\rho_{422}$, and $\rho_{2222}$ is upper bounded, respectively, by  2, 2, 3, and 4, which matches exactly with that of $\ket{G_{62}}$, $\ket{G_{44}}$, $\ket{G_{422}}$, and $\ket{G_{2222}}$.

\label{Pg:Specific}While these bounds on entanglement intactness already shed some light on the underlying entanglement structure, they are not yet informative enough to suggest any specific entanglement structure associated with the measured system. To this end, we also measure the witnesses for entanglement depth  given in Eq.~\eqref{Eq:depth} by measuring the  expectation value of $\A$ and $\A'$ for each of the prepared states. Specifically, based on the data that we have collected in measuring $\ExpVal{\mathcal M_Z}$ and $\ExpVal{\mathcal M_X}$ and the ansatz $\A_{\pm}= \cos\theta_\pm \sigma_x +\sin\theta_\pm\sigma_y$ (see Appendix~\ref{App:Depth}), a reasonably good choice of qubit observables appears to be those corresponding to $\theta_\pm=\tfrac{3(1\pm 8)}{80}$ (thereby giving $\kappa=\cos\tfrac{3}{10}$), where $\sigma_y=-i\ketbra{0}{1}+i\ketbra{1}{0}$ is the Pauli $y$ matrix.

The corresponding tightened $k$-producible bound $\beta_{8,k}(\gamma)$ as a function of $k$ and $\gamma$ is given in the Appendix~\ref{App:Depth}. Our experimental results shown in Fig.~\ref{fig:data}(c), and summarized in Table~\ref{Tbl:DepthRes} allow us to conclude a lower bound on entanglement depth of 4, 4, 3, 4, and 2, respectively, for the state $\rho_8$, $\rho_{62}$, $\rho_{44}$, $\rho_{422}$, and $\rho_{2222}$. Evidently, only the measurements of $\ExpVal{W_{de}^{8}}$ for $\rho_{422}$ and $\rho_{2222}$ reveal the expected entanglement depth, while the lower bound on entanglement depth obtained for the other states is clearly suboptimal. Our separate analysis shows that this is caused by the undesired noises in our experiment, specifically the higher-order emissions in SPDC and the mode mismatch of the interference. We analyze the decoherence induced by these two noises in Appendix~\ref{noiseappendix}.

\begin{table}[h!]
\centering
\begin{tabular}{|c|c|c|c|c|c|c|c|}
\hline
 State & $k$ &  $\gamma$ &  $\beta_{8, k}(\gamma)$ & $\ExpVal{\A}$ & $\ExpVal{ \A'}$ & $\ExpVal{\W_{de}^{8}(\gamma)}$  & Ent. Depth
\\ \hline
{$\rho_{8}$}   & 3 & 2 &  1.1699 & 0.54(9) & -0.57(9) &  1.32(15)& $\ge 4$\\
$\rho_{62}$ & 3 & 2 &  1.1699 & 0.73(5) & -0.27(8) &  1.29(8) & $\ge 4$\\
$\rho_{44}$ & 2 & 8/5 & 0.7904 & 0.76(3) & -0.07(5) & 0.91(7)  & $\ge 3$ \\
$\rho_{422}$ & 3 & 8/5 &  0.9137 & 0.84(3) & -0.02(6) & 0.95(7)  & $\ge 4$\\
$\rho_{2222}$ & 1 & 2 & 0.8365 & 0.83(3) &  0.19(5) & 0.96(5) & $\ge  2$ \\ \hline
$\ket{G_{71}}$ & 6 & 2 & 1.8858 & 0.9651 &  -0.6714 & 2.0106 & $\ge  7$ \\
$\ket{G_{53}}$ & 4 & 2 & 1.3856 & 0.9763 &  -0.0617 & 1.4164 & $\ge  5$ \\
\hline
\end{tabular}
\caption{Summary of our experimental results for determining the entanglement depth of the prepared state $\rho_i$. The second and the third column give our choice of the free parameters $k$ and $\gamma$ in Eq.~\eqref{Eq:depth}. The fourth column gives the value of the corresponding $k$-producible bound. The experimentally measured expectation values $\ExpVal{\A}$, $\ExpVal{\A'}$ and  $\ExpVal{\W_{de}^{8}(\gamma)}=\gamma\kappa^8\ExpVal{\A}-\ExpVal{\A'}$ are given in column 5-7. The last column gives our best lower bound on the entanglement depth of $\rho_i$ based on the measured value of $\W_{de}^{8}(\gamma)$. An entanglement depth of $k$ means that the state requires at least $k$-body entanglement for its preparation. For comparison, we have also included here the theoretical values of the witness for an ideal $\ket{G_{53}}$ and $\ket{G_{71}}$ assuming the same set of measurements.}\label{Tbl:DepthRes}
\end{table}

Nevertheless, as we demonstrate below, the measurement results obtained from both witnesses are useful and complement each other nicely---at least in our setup---to {\em suggest} some {\em minimal} entanglement structure of the measured system. By {\em minimal}, we mean that the corresponding entanglement structure is compatible with {\em all} the empirical observation, and it is also not more entangled nor more complicated than necessary to explain these empirical observations. Thus, despite the fact that a general mixed state does not have a unique convex decomposition, we are only concerned with identifying a compatible entanglement structure that is not  a convex mixture of different entanglement structures.

Coming back to the identification of a minimal entanglement structure associated with our setup, suppose that the IN is controlled by three binary random number generators, each of which determines the state of one of the PBSs. Thus, the IN is randomly set to be one of the geometries depicted in Fig.~\ref{fig:geometry}, thereby resulting in one of the corresponding entanglement structures. In the next two paragraphs, we show how to deduce the structure corresponding to $\ket{G_{422}}$ and $\ket{G_{2222}}$. The corresponding analysis for $\ket{G_{62}}$ and $\ket{G_{44}}$ is shown in Appendix~\ref{App:OtherStructures}.
To this end, we use a circular chart to schematically represent the entanglement structure of the underlying state, see Fig.~\ref{fig:data2}.  {\em A priori}, the chart is split into eight equal pieces, where each piece represents one of the subsystems (a photon) labeled uniquely by their path:   $\{1,2^{\prime}, 3^{\prime}, 4, 5, 6^{\prime}, 7^{\prime}, 8\}$. Our goal is to determine a minimal entanglement structure compatible with the empirical observation. If any of the subsystems is found to be genuinely multipartite entangled, we combine the respective pieces and color them the same way.

\begin{figure*}[t!]
\centering
\includegraphics[width=2\columnwidth]{data2V2}
\caption{\textbf{Experimental results leading to the determination of entanglement structure. (a) } Expectation value of the four-partite GME witness $\W_4=2\left[\left(\proj{H}\right)^{\otimes4}+\left(\proj{V}\right)^{\otimes4}\right]+\sigma_X^{\otimes 4}$ for all four-party subsystems of $\rho_{422}$. \textbf{(b)} Expectation value of the two-party GME witness $\W_2=2\left[\left(\proj{H}\right)^{\otimes2}+\left(\proj{V}\right)^{\otimes2}\right]+\sigma_X^{\otimes 2}$ for all two-party subsystems among the path modes $\{1, 2^{\prime}, 3^{\prime}, 4\}$ of $\rho_{422}$. \textbf{(c) }  Expectation value of the two-party GME witness $\W_2$ for all two-party subsystems of $\rho_{2222}$.}
\label{fig:data2}
\end{figure*}

For example, since our measurement of $\ExpVal{\W^8_{de}(\gamma)}$ on $\rho_{422}$ witnesses an entanglement depth of 4 or more, at least four of the photons exhibit GME. Importantly, our measurement results of $\ExpVal{\W^8_{se}(\alpha)}$ also allow us to evaluate $\ExpVal{\W^k_{se}(\alpha)}$ among any $k$-partite subset of the 8 photons. Indeed, from the measured expectation values of the four-partite witness $\W^4_{se}(\alpha=2)$ [see Fig.~\ref{fig:data2}(a)], only the four photons with path modes $\{5, 6^{\prime}, 7^{\prime}, 8\}$ seem to be genuinely four-photon entangled. On the other hand, our measurement of $\ExpVal{\W^8_{se}(\alpha=8/7)}$ concludes that $\rho_{422}$ is at most triseparable. Combining this with the above observation suggests that $\rho_{12^{\prime}3^{\prime}4}$ is a biseparable state. Thus, $\rho_{12^{\prime}3^{\prime}4}$ can be either a tensor product of a genuinely three-photon entangled state and a single-photon state, or a tensor product of two two-photon entangled states. Our evaluation of $\W^2_{se}(\alpha=2)$  for all possible combinations of two photons from path modes $1,2^{\prime},3^{\prime}$, and 4 [see Fig.~\ref{fig:data2}(a)] clearly reveals that the two photons from path $1, 2^{\prime}$, as well as those from path $3^{\prime}, 4$ are entangled. At the same time, our measurement of $\ExpVal{\W^3_{se}(\alpha=2)}$ among all possible three-photon combinations from $1,2^{\prime},3^{\prime}$, and 4 does not reveal any three-photon entanglement. The above observations, together with the assumptions stated above, lead us to conclude that $\rho_{422}$  shares the same {(minimal)} entanglement structure as $\ket{\textrm{GHZ}_4}_{56^{\prime}7^{\prime}8}\otimes\ket{\textrm{GHZ}_2}_{12^{\prime}}\otimes\ket{\textrm{GHZ}_2}_{3^{\prime}4}$.

Similarly, for $\rho_{2222}$, our measurement of $\ExpVal{\W^8_{de}(\gamma)}$ and $\ExpVal{\W^8_{se}(\alpha)}$ leads to the conclusion that $\rho_{2222}$ involves at least 2-body entanglement while {\em not} being 5-separable. Various entanglement structures are compatible with these observations. However, from the observed values of $\ExpVal{\W^2_{se}(\alpha=2)}$ for all possible two-photon combinations [see Fig.~\ref{fig:data2}(c)], we see that the photon pairs from path modes $\{1, 2^{\prime}\}$, $\{3^{\prime}, 4\}$, $\{5, 6^{\prime}\}$, and$\{7^{\prime}, 8\}$ are clearly entangled. Thus, with the assumptions stated above, the only entanglement structure compatible with these observations is that of $\ket{\textrm{GHZ}_2}_{12^{\prime}}\otimes\ket{\textrm{GHZ}_2}_{3^{\prime}4}\otimes\ket{\textrm{GHZ}_2}_{56^{\prime}}\otimes\ket{\textrm{GHZ}_2}_{7^{\prime}8}$.

At this point, one may wonder whether our experimental setup is capable of generating 8-photon entangled states with other entanglement structures (such as those involving odd-party entangled states), and how our witnesses fare in those cases. Let us remark that our experimental setup is not limited  to generating only the even-party-entangled state depicted in Fig.~\ref{fig:geometry} ---it can, in principle be used to produce {\em all} quantum states of the form $\ket{G_{i_1\ldots i_m}}$ (such as those containing only odd-party-entangled states). However, we did not experimentally prepare these other states as their generation (using our setup) involves heralding and thus a significantly lower count rate. As an example, to create an 8-photon entangled state $\ket{G_{71}}$ that is only 7-photon entangled, we can set two polarizers in paths 1 and 2 at 45$^\circ$, which project $\ket{\text{GHZ}_2}_{12}$ on state $\ket{++}_{12}$. Then, by raising all three PBSs, the state $\ket{G_{71}}=\frac{1}{2}(\ket{H}^{\otimes 7}+\ket{V}^{\otimes 7})_{2^{\prime}3^{\prime}456^{\prime}7^{\prime}8}\otimes (\ket{H}+\ket{V})_{1}$ can be obtained.  Likewise, to generate $\ket{G_{53}}$, we can set a polarizer between $\text{PBS}_1$ and $\text{PBS}_3$ at 45$^\circ$ in the geometry shown in Fig.~\ref{fig:geometry}a. Then, the state $\ket{G_{53}}=\frac{1}{2}(\ket{H}^{\otimes 5}+\ket{V}^{\otimes 5})_{12^{\prime}3^{\prime}47^{\prime}}\otimes((\ket{H}^{\otimes 3}+\ket{V}^{\otimes 3})_{56^{\prime}8}$ can be generated.

To illustrate that our theoretical methods also work well in these cases, we have calculated $\ExpVal{\mathcal M_Z}$,  $\ExpVal{\mathcal M_X}$, and $\ExpVal{\mathcal A}$, $\ExpVal{\mathcal A^{\prime}}$ for $\ket{G_{53}}$ and $\ket{G_{71}}$, respectively. The results are given in Tables~\ref{Tbl:SepRes} and~\ref{Tbl:DepthRes} and shown as blue and red dots in Fig.~\ref{fig:data}(b) and Fig.~\ref{fig:data}(c). Together, they confirm that $\ket{G_{53}}$ is at most a biseparable state with an entanglement depth of at least 5, and $\ket{G_{71}}$ is at most a biseparable state with an entanglement depth of at least 7. By systematically evaluating the value of the witnesses for the right number of parties over all possible choices of parties---as we did above---and finding out which subset of photons is genuinely multipartite entangled, we obtain a compatible entanglement structure, which coincides with that of the given state.

\section{Conclusion}

In this work, we introduce the notion of an {\em entanglement structure}, which details not only the extent of many-body entanglement present but also their segregation among the various subsystems. Identifying the entanglement structure of an arbitrary multipartite quantum state, as with the certification of genuine multipartite entanglement, generally requires an exponential number of local measurements. Nonetheless, the retrieval of any partial information on the entanglement structure of an experimentally-prepared system is always welcome, as it provides diagnostic information on where imperfections in the setup may lie. Importantly, such information is often already available in the data collected for the measurement of entanglement witnesses, even if the measured value does not reveal genuine multipartite entanglement. Here, we propose two complementary families of witnesses capable of bounding, respectively, the entanglement intactness (i.e., nonseparability) and the entanglement depth of the measured system, thereby providing nontrivial information about the underlying entanglement structure.

Our scheme works for any number of parties and can be generalized to arbitrary dimensions~\cite{QiXiao:Unpublished}. In contrast with conventional entanglement verification schemes, our witnesses involve free parameters that can be varied {\em a posterori}, thereby allowing us to optimize---in a similar spirit to Ref.~\cite{Eisert:2007}---the data collected to arrive at the strongest possible conclusion.  Note also that the possibility to perform such an {\em a posteriori} optimization is not unique to our witnesses. Rather, by introducing some auxiliary free parameters, one can, in principle, always optimize the choice of the witness depending on the measured data, as we illustrate in Sec.~\ref{Sec:Exp} (see also Appendix~\ref{App:Depth}).

Evidently, from the measurement of the local observables considered, it is possible to evaluate a many other expectation values (including those involving only a subset of parties) that we have not considered. The challenge then is to determine the $m$-separable bound, or the $k$-producible bound of the corresponding witness operator. Our work can thus  be seen as one of the first steps towards this general problem of finding the optimal linear entanglement witness directly from the measurement results. Even then, a linear entanglement witness generically works well only for a specific target state, or for quantum states that do not differ too much from it. Another line of research thus consists of employing a {\em nonlinear} entanglement witness for the detection of entanglement structure. Solving any of these problems in full generality is nonetheless clearly beyond the scope of the current research.

Experimentally, we have demonstrated how the entanglement structure of the tensor products of GHZ-type states can be inferred---with the help of some auxiliary assumptions---by systematically combining the results obtained during the measurement of our witnesses. More precisely, we have shown that the minimal entanglement structure deduced from these experimental results are exactly the entanglement structure that we expected from our experimental setup. The usefulness of the algorithmic procedures that we have introduced here in a more general setting, of course, remains to be investigated.

Finally, it is worth noting that the entanglement intactness witnesses introduced in the current work have very recently been generalized~\cite{Qi:Unpublished2} to the case of 1-dimensional cluster states. Since these states and GHZ states are both specific cases of a graph state, an open question that follows is whether these witnesses can be further generalized to cover a general graph state while maintaining their appealing feature of involving only two local observables (see also Ref.~\cite{Toth2005} in this regard). Given the importance of such states for one-way quantum computation~\cite{raussendorf2001one}, such a generalization may then be used to benchmark our progress towards the ultimate goal of demonstrating quantum supremacy.

\section{Acknowledgements}

We thank Denis Rosset for sharing his software, which facilitated our verification of the $k$-producible bounds presented in this work. We also thank two anonymous referees of PRX for the many helpful comments and suggestions. This work was supported by the National Natural Science Foundation of China (under Grant No.11404318, 11374284, 11674193 and 11425417), the National Fundamental Research Program (under Grant No. 2013CB922001), and the Chinese Academy of Sciences. Y.-C.L and J.-C.H were supported by the Ministry of Science and Technology, Taiwan (Grant No. 104-2112-M-006-021-MY3). H.L., Q. Z. and Z.-D. L. contributed equally to this work.

\appendix

\section{Theory}\label{theoryappendix}
The structure of multipartite entanglement is much richer than the bipartite case. An $n$-partite pure state $\ket{\phi}$ is said to be $m$-separable ($2\le m\le n$) if the $n$ parties can be divided into $m$ disjoint subsets $\{\mathcal{G}_i\}_{i=1,\ldots,m}$ such that $\ket{\phi}$ is the tensor product of a pure state $\ket{\psi_{\mathcal{G}_i}}$ from each of these subsets, i.e.,
\begin{equation}\label{ksepara}
\begin{aligned}
\ket{\phi}=\bigotimes_{i=1}^{m} \ket{\psi_{\mathcal{G}_i}}.
\end{aligned}
\end{equation}
The $m$-separability of a quantum state describes the extent of segregation. The larger the value of $m$, the more segregated $\ket{\phi}$ is. If $m=n$, we refer to it as a fully separable state.
Conversely, a non-$m$-separable state implies that it cannot be generated by segregating the subsystems into $m$ disjoint subsets and allowing arbitrary local manipulations within each subset.

Though (non)-$m$-separability of $\ket{\phi}$ already provides us with important information about the entanglement structure of $\ket{\phi}$, it is still not enough. How many parties each disjoint subset contains is also part of the specification of its entanglement structure. Let us denote by $n_i$  the number of subsystems involved in the subset $\mathcal{G}_i$ (note that $\sum_{i=1}^m n_i= n$); then $\ket{\phi}$ is said to be $k$-producible if the largest constituent of $\ket{\phi}$ involves at most $k$ parties, i.e., if $\max_i n_i = k$. Generating a $k$-producible state requires at most $k$-body entanglement and on the other hand, more than $k$-body entanglement is required in generating a not $k$-producible state.

For mixed state $\rho$, the $m$-separability and $k$-producibility of a general mixed state $\rho$ can be defined analogously: $\rho$ is $m$-separable (or $k$-producible) if it admits a convex decomposition in terms of $m$-separable ($k$-producible) pure states. If a quantum state $\rho$ is $k$-producible but not $(k-1)$-producible, we say it has an entanglement depth of $k$ \cite{Sorensen2001}. On the other hand, we say that a quantum state $\rho$ has an entanglement intactness of $m$ if it is $m$-separable but not $(m+1)$-separable. A genuinely $n$-partite entangled has an entanglement intactness (depth) of 1 ($n$) whereas a fully separable $n$-partite state has an entanglement intactness (depth) of $n$ (1). In particular, any quantum state that has an entanglement depth greater than 2 is conventionally said to contain multipartite (many-body) entanglement.

\subsection{A family of witnesses for non-$m$-separability with two local measurement settings}
\label{appendixseparablity}

In this section, we introduce the following two-parameter family of two-observable witnesses:
\begin{widetext}
\begin{equation}\label{App:eq:separability}
	\W^n_{se}(\alpha)=\alpha \Mz+\Mx \stackrel{m-\text{sep.}}{\le} \id_n\, \max\left\{\alpha, \frac{\alpha}{2^{m-1}}+1\right\}, \quad  \alpha\in(0, 2],
\end{equation}
\end{widetext}
where $\Mz=\left(\proj{0}\right)^{\otimes{n}}+\left(\proj{1}\right)^{\otimes{n}}$, $\Mx=\sigma_x^{\otimes{n}}$,
$\sigma_x$ is the Pauli $x$ matrix, $\{\ket{0},\ket{1}\}$ are the computational basis states, $\id_n$ is the $2^n\times 2^n$ identity matrix, and $m$-sep. in Eq.~\eqref{App:eq:separability} signifies that the inequality holds true at the level of the expectation value for all $m$-separable $n$-qubit states. In other words, for an arbitrary $n$-partite state $\rho$, if $\langle \W^n_{se}(\alpha) \rangle_\rho> \max\{\alpha, \frac{\alpha}{2^{m-1}}+1\}$, one certifies that $\rho$ has an entanglement intactness of $m-1$ or lower. Here, $\alpha$ is a free positive parameter that may be varied to identify the best possible upper {bound} on the entanglement intactness of $\rho$.

\subsubsection{Family of genuine $n$-qubit entanglement witnesses}

For the specific case of $m=2$, the witness of Eq.~\eqref{App:eq:separability} reduces to one that can be used to certify genuine $n$-qubit entanglement.

\begin{theorem}\label{genuine theorem}
Let $\rho$ be an arbitrary $n$-qubit biseparable state; then its expectation value for $\Mz$ and $\Mx$ [and hence $\ExpVal{\W^n_{se}(\alpha)}]$
satisfy
\begin{equation}\label{App:GME:W}
\!\!\!\alpha \ExpVal{\Mz}_\rho{\pm}\ExpVal{\Mx}_\rho \stackrel{\text{\rm 2-sep.}}{\le} \frac{\alpha}{2}+1, \quad  \alpha\in(0, 2].
\end{equation}
\end{theorem}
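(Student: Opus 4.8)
\emph{Proof idea.} The plan is to reduce to pure product states, diagonalize the two observables simultaneously, and then invoke the standard fact that the GHZ fidelity of a biseparable state is at most $1/2$. First, the $n$-qubit biseparable (i.e. $2$-separable) states are, by definition, convex combinations of pure states that are product across \emph{some} bipartition $A|\bar A$; since the left-hand side of \Eq{App:GME:W} is linear in $\rho$ and the right-hand side is constant, it suffices to prove the inequality for $\rho=\proj{\psi}$ with $\ket{\psi}=\ket{\psi_A}\otimes\ket{\psi_{\bar A}}$ and an arbitrary fixed bipartition. Second, I would observe that $\Mz=\proj{0}^{\otimes n}+\proj{1}^{\otimes n}$ is exactly the orthogonal projector onto the two-dimensional ``GHZ subspace'' $\mathcal H_{\mathrm{GHZ}}=\mathrm{span}\{\ket{0}^{\otimes n},\ket{1}^{\otimes n}\}$, and that $\Mx=\sigma_x^{\otimes n}$ commutes with it (it merely exchanges $\ket{0}^{\otimes n}$ and $\ket{1}^{\otimes n}$). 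Hence the two observables are simultaneously block-diagonal: writing $\ket{G_\pm}=\tfrac1{\sqrt2}(\ket{0}^{\otimes n}\pm\ket{1}^{\otimes n})$, on $\mathcal H_{\mathrm{GHZ}}$ one has $\Mz=\proj{G_+}+\proj{G_-}$ and $\Mx=\proj{G_+}-\proj{G_-}$, while on $\mathcal H_{\mathrm{GHZ}}^{\perp}$ one has $\Mx=\Pi_+-\Pi_-$ with $\Pi_\pm$ the $(\pm1)$-eigenprojectors of $\sigma_x^{\otimes n}$ there and $\Pi_++\Pi_-=\id_n-\Mz$.

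With this decomposition the operator identity $\alpha\Mz+\Mx=(\alpha+1)\proj{G_+}+(\alpha-1)\proj{G_-}+\Pi_+-\Pi_-$ holds, together with its sign-flipped analogue $\alpha\Mz-\Mx=(\alpha-1)\proj{G_+}+(\alpha+1)\proj{G_-}-\Pi_++\Pi_-$. Setting $p_\pm=\ExpVal{\proj{G_\pm}}_\psi$ and $q_\pm=\ExpVal{\Pi_\pm}_\psi$, so that $p_++p_-+q_++q_-=1$, and using only the crude estimate $\pm(q_+-q_-)\le q_++q_-=1-p_+-p_-$, I obtain $\ExpVal{\alpha\Mz\pm\Mx}_\psi\le\alpha\,p_\pm+(\alpha-2)\,p_\mp+1$. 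Since $\alpha\le2$ the term $(\alpha-2)\,p_\mp$ is non-positive and may be discarded, leaving $\ExpVal{\alpha\Mz\pm\Mx}_\psi\le\alpha\,p_\pm+1$.

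The last step is to bound $p_\pm$ using biseparability. Now $p_+=|\langle G_+|\psi\rangle|^2$ is precisely the fidelity of $\ket\psi$ with the $n$-qubit GHZ state $\ket{G_+}=\ket{\mathrm{GHZ}_n}$, and for the product state $\ket{\psi_A}\otimes\ket{\psi_{\bar A}}$ this is at most $1/2$ by a one-line Cauchy--Schwarz bound: denoting by $a_0,a_1$ (resp.\ $b_0,b_1$) the amplitudes of $\ket{\psi_A}$ (resp.\ $\ket{\psi_{\bar A}}$) on the all-zero and all-one computational strings of the block, $|\langle G_+|\psi\rangle|^2=\tfrac12|a_0b_0+a_1b_1|^2\le\tfrac12(|a_0|^2+|a_1|^2)(|b_0|^2+|b_1|^2)\le\tfrac12$, so $p_+\le\tfrac12$ for every biseparable $\rho$. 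For the opposite sign, $\ket{G_-}$ differs from $\ket{G_+}$ by a single-qubit $\sigma_z$, and biseparability is invariant under local unitaries, hence $p_-\le\tfrac12$ as well. Substituting gives $\ExpVal{\alpha\Mz\pm\Mx}_\rho\le\alpha/2+1$ for all $\alpha\in(0,2]$, which is \Eq{App:GME:W}; the bound is tight, as seen by taking $\ket{\mathrm{GHZ}_{n_A}}\otimes\ket{\mathrm{GHZ}_{n-n_A}}$ across the cut.

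The argument is short, so there is no single heavy computation; the two points needing a moment's thought are (i) recognizing that $\Mz$ and $\Mx$ are simultaneously block-diagonalized, with $\Mz$ the GHZ-subspace projector, and (ii) checking that the coarse bound $|q_+-q_-|\le q_++q_-$ is lossless, which it is because it is saturated on the extremal biseparable states above. If a fully self-contained treatment is wanted, the fidelity bound $p_+\le\tfrac12$ is the one external ingredient, supplied by the two-line estimate just given. I would expect the same simultaneous-block-diagonalization idea, now applied to finer groupings of the computational basis according to an $m$-partition, to be the natural route to the general $m$-separable bound of \Eq{App:eq:separability}.
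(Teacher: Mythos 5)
Your argument is correct, but it is not the route the paper takes. The paper's proof of Theorem~\ref{genuine theorem} is operator-theoretic: for a bipartition $\{n_1\}\{n_2\}$ it traces the witness against $\proj{\phi_a}$, observes that the resulting operator $M_b$ on the remaining $n_2$ qubits has largest eigenvalue $\tfrac{\alpha(x+y)}{2}+\sqrt{z^2+\tfrac{\alpha^2(x-y)^2}{4}}$ depending on $\ket{\phi_a}$ only through the three scalars $x,y,z$ and not on $n_2$ at all (Lemma~\ref{3}), and then iterates this together with permutation symmetry to collapse the whole optimization to a single-qubit parametrization, $f_{\{1\}\{1\}}=\tfrac{\alpha}{2}+\tfrac12\sqrt{\alpha^2\cos^22\theta+4\sin^22\theta\cos^2\varphi}\le\tfrac{\alpha}{2}+1$. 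You instead block-diagonalize $\alpha\Mz\pm\Mx$ in the GHZ basis and funnel everything into the textbook fidelity bound $|\langle\mathrm{GHZ}_n|\psi_A\otimes\psi_{\bar A}\rangle|^2\le\tfrac12$; every step checks out --- the identity $\alpha\Mz+\Mx=(\alpha+1)\proj{G_+}+(\alpha-1)\proj{G_-}+\Pi_+-\Pi_-$, the estimate $q_+-q_-\le 1-p_+-p_-$, the discarding of $(\alpha-2)p_-\le 0$ (which is precisely where $\alpha\le 2$ enters), the Cauchy--Schwarz bound on $p_\pm$, and tightness on $\ket{\mathrm{GHZ}_{n_A}}\otimes\ket{\mathrm{GHZ}_{n-n_A}}$. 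Your version is more elementary and self-contained, and it has the bonus of making explicit that any violation of Eq.~\eqref{App:GME:W} certifies GHZ fidelity above $\tfrac12$. What the paper's heavier reduction buys is the general case: the same lemma collapses the $m$-separable problem to $m$ single qubits, after which a sum-of-squares argument delivers the full bound $\max\{\alpha,\alpha/2^{m-1}+1\}$. Be aware that your closing suggestion --- extending the fidelity route to general $m$ --- does not go through as stated: the fully separable state $\ket{0}^{\otimes n}$ still has GHZ fidelity $\tfrac12$, so no bound on $p_\pm$ alone can ever produce the $\alpha$ branch of the maximum; one genuinely needs the joint tradeoff between $\ExpVal{\Mz}$ and $\ExpVal{\Mx}$, which is exactly what the paper's reduction to $m$ single-qubit factors captures.
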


To prove the theorem, let us denote by $\vec{n}=\{n_1\}\{n_2\}$ a partition of the $n$ parties into a subset of $n_1$ parties and the complementary subset of $n_2=n-n_1$ parties. The invariance of $\W^n_{se}(\alpha)$ with respect to an arbitrary permutation of subsystem Hilbert spaces implies that in determining the biseparable bound, i.e., the maximal quantum value of $\ExpVal{\W^n_{se}(\alpha)}$ over all biseparable $n$-qubit states, the actual members of each subset $\mathcal{G}_1$ and $\mathcal{G}_2$ are irrelevant.

Without loss of generality, let us thus imagine that the first $n_1$ parties belong to $\mathcal{G}_1$, and denote by $S_{\vec{n}}$  the set of all $n$-qubit pure states that are biseparable with respect to this partitioning specified by $\vec{n}$. We may then write the biseparable bound, i.e., the maximal value of the right-hand-side of Eq.~\eqref{App:GME:W} as:
\begin{equation}\label{Eq:Dfn:Bisep}
	\max_{\text{\rm bisep. } \rho} \ExpVal{\W^n_{se}(\alpha)}_\rho =\max \limits_{\vec{n}}f_{\vec{n}} = \max \limits_{\{n_1\}\{n_2\}}f_{\{n_1\}\{n_2\}},
\end{equation}
where
\begin{equation}
\begin{aligned}\label{Eq:Dfn:fn}
f_{\vec{n}}:=\max \limits_{\ket{\phi} \in S_{\vec{n}} } \ExpVal{\W^n_{se}(\alpha)}_{\ket{\phi}}=\max \limits_{\ket{\phi} \in S_{\vec{n}} }  \mathrm{Tr}\left[(\alpha \Mz+\Mx)\proj{\phi}\right].
\end{aligned}
\end{equation}
As noted above, $\W^n_{se}(\alpha)$ is permutational invariant, we thus have
\begin{equation}\label{Eq:f:symmetry}
	f_{\{n_1\}\{n_2\}}=f_{\{n_2\}\{n_1\}}
\end{equation}

Next, we present a key observation that allows one to simplify the maximization of Eq.~\eqref{Eq:Dfn:Bisep} over an arbitrary $(n_1+n_2)$-qubit biseparable pure state to a maximization over an arbitrary $(n_1+1)$-qubit biseparable pure state.

\begin{lemma}\label{3}
The value of $f_{\vec{n}}$ for the bipartition of $n=n_1+n_2$ parties specified by $\vec{n}=\{n_1\}\{n_2\}$  is identical to the value of $f_{\vec{n'}}$ for the bipartition of $n'=n_1+1$ parties into $\vec{n'}=\{n_1\}\{1\}$.
\end{lemma}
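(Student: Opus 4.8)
The plan is to show that for any biseparable pure state $\ket{\phi}\in S_{\vec n}$ with respect to the partition $\{n_1\}\{n_2\}$, the three relevant overlaps $\braket{0\cdots0|\phi}$, $\braket{1\cdots1|\phi}$, and $\bra{\phi}\sigma_x^{\otimes n}\ket{\phi}$ depend on the state of the $\mathcal G_2$-block only through a small number of scalars, and that exactly these same scalars can be realized by a single qubit. Write $\ket{\phi}=\ket{a}_{\mathcal G_1}\otimes\ket{b}_{\mathcal G_2}$ with $\ket{a}$ an $n_1$-qubit state and $\ket{b}$ an $n_2$-qubit state. Since $\Mz=\proj{0}^{\otimes n}+\proj{1}^{\otimes n}$ and $\Mx=\sigma_x^{\otimes n}$ are both themselves tensor products across the $\{n_1\}\{n_2\}$ cut, one computes
\begin{equation}
\ExpVal{\W^n_{se}(\alpha)}_{\ket\phi}=\alpha\left(|a_0|^2|b_0|^2+|a_1|^2|b_1|^2\right)+\ExpVal{\sigma_x^{\otimes n_1}}_{\ket a}\,\ExpVal{\sigma_x^{\otimes n_2}}_{\ket b},
\end{equation}
where $a_0=\braket{0\cdots0|a}$, $a_1=\braket{1\cdots1|a}$ and similarly for $b$. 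Thus $\ket b$ enters only through the triple $(|b_0|^2,|b_1|^2,\ExpVal{\sigma_x^{\otimes n_2}}_{\ket b})$.

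The key step is then to check that the set of achievable triples $(|b_0|^2,|b_1|^2,\ExpVal{\sigma_x^{\otimes n_2}}_{\ket b})$ over $n_2$-qubit pure states $\ket b$ coincides with — or at least, for the purpose of the maximization, is no larger than — the set achievable over single-qubit pure states $\ket{b'}=b'_0\ket0+b'_1\ket1$, for which the triple is $(|b'_0|^2,|b'_1|^2,2\,\mathrm{Re}(b'^*_0 b'_1))$. One direction is trivial: any single qubit is in particular an $n_2$-qubit state (embed it and tensor with $\ket0$'s, though one must be slightly careful since $\sigma_x^{\otimes n_2}$ is not $\sigma_x\otimes\id$; instead use a GHZ-like embedding $b'_0\ket{0\cdots0}+b'_1\ket{1\cdots1}$, for which $\ExpVal{\sigma_x^{\otimes n_2}}=2\,\mathrm{Re}(b'^*_0b'_1)$ exactly). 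For the other direction, given arbitrary $\ket b$, one must exhibit a qubit state reproducing the same triple, or argue that the triple of $\ket b$ is dominated in the optimization by that of some qubit state. The clean way is to note $|{\ExpVal{\sigma_x^{\otimes n_2}}_{\ket b}}|\le 1$ always, whereas for the two-dimensional GHZ embedding with the same $(|b_0|^2,|b_1|^2)$ one has $|\ExpVal{\sigma_x^{\otimes n_2}}|=2\sqrt{|b_0|^2|b_1|^2}$ attainable; and since the witness is linear and, for the maximization, one wants $\ExpVal{\sigma_x^{\otimes n_2}}_{\ket b}$ as large as possible in the appropriate sign given $\ket a$, the optimum over $\ket b$ is attained on such a GHZ embedding. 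Hence $f_{\{n_1\}\{n_2\}}$, being a maximum, is unchanged if we restrict $\ket b$ to the two-dimensional subspace $\mathrm{span}\{\ket{0\cdots0},\ket{1\cdots1}\}$, which is isometric to a single qubit — giving $f_{\{n_1\}\{n_2\}}=f_{\{n_1\}\{1\}}=f_{\vec{n'}}$.

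The main obstacle I anticipate is the second direction above: one needs to be careful that collapsing $\ket b$ to the GHZ-subspace does not accidentally change $(|b_0|^2,|b_1|^2)$ in a way that helps the $\alpha$-term at the expense of the $\sigma_x$-term or vice versa. The resolution is that the map $\ket b\mapsto$ (normalized projection onto $\mathrm{span}\{\ket{0\cdots0},\ket{1\cdots1}\}$) followed by a relative-phase adjustment can only \emph{increase} $|\ExpVal{\sigma_x^{\otimes n_2}}|$ while keeping $(|b_0|^2,|b_1|^2)$ fixed (since $|b_0|^2+|b_1|^2$ can only go up under projection, and after renormalizing one re-scales, but then a short monotonicity argument in the ratio $|b_0|^2:|b_1|^2$ together with the linearity of $\ExpVal{\W^n_{se}}$ in these variables shows no loss). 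So the argument reduces to the elementary fact that for fixed marginals $(|b_0|^2,|b_1|^2)$ the extreme value of $\ExpVal{\sigma_x^{\otimes n_2}}$ is $\pm 2\sqrt{|b_0|^2|b_1|^2}$, attained precisely on the two-dimensional GHZ subspace — exactly the single-qubit situation. Combining this with the symmetry \eqref{Eq:f:symmetry} then completes the reduction claimed in the lemma.
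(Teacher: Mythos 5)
Your overall strategy --- fix the $\mathcal{G}_1$ factor, observe that it enters only through the scalars $x=|a_0|^2$, $y=|a_1|^2$, $z=\ExpVal{\sigma_x^{\otimes n_1}}_{\ket a}$, and then argue that the optimization over the $\mathcal{G}_2$ factor collapses to a single qubit --- is the same as the paper's. However, the ``elementary fact'' on which your second direction rests is false: for fixed $(|b_0|^2,|b_1|^2)$ the extreme value of $\ExpVal{\sigma_x^{\otimes n_2}}_{\ket b}$ is \emph{not} $\pm 2\sqrt{|b_0|^2|b_1|^2}$, and projecting onto $\mathrm{span}\{\ket{0\cdots 0},\ket{1\cdots 1}\}$ can \emph{decrease} $|\ExpVal{\sigma_x^{\otimes n_2}}|$. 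Take $n_2=2$ and $\ket b=\tfrac{1}{\sqrt2}(\ket{01}+\ket{10})$: then $b_0=b_1=0$, so your claimed bound is $0$, yet $\ExpVal{\sigma_x\otimes\sigma_x}_{\ket b}=1$, and the projection of this state onto the GHZ subspace vanishes. So the set of achievable triples $(|b_0|^2,|b_1|^2,\ExpVal{\sigma_x^{\otimes n_2}}_{\ket b})$ for $n_2\ge 2$ is strictly larger than for a single qubit, and your monotonicity argument does not go through.

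The lemma survives because the \emph{extra} triples never win the maximization, but establishing that requires examining the whole operator rather than dominating the triple pointwise. This is exactly what the paper does: for fixed $\ket{\phi_a}$, the optimum over $\ket{\phi_b}$ is the largest eigenvalue of $M_b=\alpha x(\proj 0)^{\otimes n_2}+\alpha y(\proj 1)^{\otimes n_2}+z\,\sigma_x^{\otimes n_2}$, which block-diagonalizes into $2\times 2$ blocks pairing each computational basis state with its bitwise complement. The block on $\mathrm{span}\{\ket{0\cdots0},\ket{1\cdots1}\}$ gives $\lambda_\pm=\tfrac{\alpha(x+y)}{2}\pm\sqrt{z^2+\tfrac{\alpha^2(x-y)^2}{4}}$, while every other block (where your counterexample lives) contributes only $\pm z$; since $x,y\ge0$ one has $\lambda_+\ge|z|$, so the maximal eigenvalue is $\lambda_+$, which depends only on $(x,y,z)$ and not on $n_2$. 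If you prefer to keep your formulation, the repair is the same decomposition in your language: write $\ket b=c\ket g+s\ket{g^\perp}$ with $\ket g$ in the GHZ subspace; since both $\Mz$ and $\sigma_x^{\otimes n_2}$ preserve that subspace and its orthogonal complement, the witness value is $|c|^2(\text{GHZ-block value})+|s|^2 z\ExpVal{\sigma_x^{\otimes n_2}}_{\ket{g^\perp}}\le|c|^2\lambda_++|s|^2|z|\le\lambda_+$, after which the reduction to $f_{\{n_1\}\{1\}}$ follows as you intended.
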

\begin{proof}
Let $\ket{\phi_a}$ and $\ket{\phi_b}$, respectively, be arbitrary $n_1$-qubit and $n_2$-qubit pure states. From the definition of $f_{\vec{n}}$ given in Eq.~\eqref{Eq:Dfn:fn}, one finds that
\begin{equation}\label{Eq:supfab}
\begin{aligned}
&f_{\{n_1\}\{n_2\}} \\
&= \max \limits_{\ket{\phi_a},\ket{\phi_b} }  \mathrm{Tr}[(\alpha \Mz+\Mx)\proj{\phi_a}\otimes\proj{\phi_b}],\\
&=\max \limits_{\ket{\phi_a}}\,\, \mathrm{eig}_{\max}\{\mathrm{Tr}_a[(\alpha \Mz+\Mx)\proj{\phi_a}\otimes\id_{n_2}\}\\
&=\max \limits_{\ket{\phi_a}}\,\, \mathrm{eig}_{\max} M_b,\\
\end{aligned}
\end{equation}
where $M_b$ is an observable defined on the remaining $n_2$-qubit space,
\begin{equation}\label{Eq:M_b}
\begin{aligned}
&M_{b}=\alpha x(\proj{0})^{\otimes{n_2}}+\alpha y(\proj{1})^{\otimes{n_2}}+z(\sigma_x)^{\otimes{n_2}},
\end{aligned}
\end{equation}
and it depends on $\ket{\phi_a}$ via $ x= \bra{\phi_a} (\proj{0})^{\otimes{n_1}}\ket{\phi_a}$,
$ y= \bra{\phi_a} (\proj{1})^{\otimes{n_1}}\ket{\phi_a}$, and
$ z= \bra{\phi_a} (\sigma_x)^{\otimes{n_1}} \ket{\phi_a}$.

For any integer $n_2\ge 1$, $M_b$ has the following generic (sparse) matrix representation:
\begin{equation}\label{Eq:Mrest}
M_b=\left(   \begin{array}{ccc}
  \alpha x &  & z\\
     & \iddots  &       \\
   z& &\alpha y\\
    \end{array}
\right).
\end{equation}
Moreover, it can be verified that $M_b$ has two $(2^{n_2-1}-1)$-fold degenerate eigenvalues $\pm z$ and two nondegenerate eigenvalues  $\lambda_{\pm}=\tfrac{\alpha(x+y)}{2}\pm\sqrt{z^2+\tfrac{\alpha^2(x-y)^2}{4}}$. Since $x,y$ are non-negative, it follows that $|z| \le \frac{\alpha(x+y)}{2}+\sqrt{z^2+\frac{\alpha^2(x-y)^2}{2}}$. Hence, the largest eigenvalue of $M_b$ is necessarily  $\lambda_\text{\tiny max}=\lambda_+$.

Importantly, as long as $n_2\ge 1$, the same conclusion holds regardless of the actual number of qubits involved in the definition of $M_b$. In other words, while the size of $M_b$ depend on $n_2$, its largest eigenvalue $\lambda_\text{\tiny max}$, and hence $f_{\vec{n}}$ only depends on $\ket{\phi_a}$ via $x,y$ and $z$. Consequently, the very same argument can be repeated in the computation of $f_{\vec{n'}}$ with $\vec{n}'=\{n_1\}\{1\}$ to arrive at the conclusion that the largest eigenvalue of the matrix corresponding to Eq.~\eqref{Eq:M_b} is again $\lambda_+$. Therefore, $f_{\vec{n}}$ for the bipartition specified by $\vec{n}=\{n_1\}\{n_2\}$  coincides with  $f_{\vec{n'}}$ for the bipartition specified by $\vec{n'}=\{n_1\}\{1\}$.
\end{proof}

Now, we are in the position to prove Theorem~\ref{genuine theorem} by combining Lemma~\ref{3}, Eq.~\eqref{Eq:f:symmetry} and explicitly calculating the maximal eigenvalue of the resulting $2\times2$ matrix.
\begin{proof}
From Lemma~\ref{3}, we note that for arbitrary $n_1,n_2$ such that $n=n_1+n_2$, we have $f_{\vec{n}}=f_{\{n_1\}\{n_2\}}=f_{\{n_1\}\{1\}}$. Using Eq.~\eqref{Eq:f:symmetry}, $f_{\{n_1\}\{1\}}$ can be rewritten as $f_{\{1\}\{n_1\}}$. Applying Lemma~\ref{3} again to $f_{\{1\}\{n_1\}}$, we thus find that $\max  \limits_{\text{\rm bisep. } \rho} \ExpVal{\W^n_{se}(\alpha)}_\rho =\max \limits_{\vec{n}}f_{\vec{n}}=f_{\{1\}\{1\}}$. Computation of the biseparable bound thus amounts to computing the maximal eigenvalue of $M_b=\left(   \begin{array}{cc}
  \alpha x &  z\\
   z&\alpha y\\
    \end{array}
\right)$. Let us adopt the  parameterization $\ket{\phi_a}=\cos\theta\ket{0} + {\rm e}^{{\rm i}\varphi}\sin\theta\ket{1}$, then $x=\cos^2\theta$, $ y=\sin^2\theta$, $z=\cos\varphi\sin2\theta$ and $f_{\{1\}\{1\}}=\tfrac{\alpha}{2} + \tfrac{1}{2}\sqrt{\alpha^2\cos^22\theta+4\sin^22\theta\cos^2\varphi}$. For $\alpha\in(0,2]$, the term in the square root is clearly maximized by setting $\theta=\tfrac{\pi}{4}$, $\varphi=0$, thereby giving a biseparable bound of $f_{\{1\}\{1\}}=\tfrac{\alpha}{2} +1$.
\end{proof}

We thus prove the result of Theorem~\ref{genuine theorem}. Note that the witnesses of Ref.~\cite{Toth2005} is a special case of our witnesses corresponding to $\alpha=2$.

\subsubsection{Family of (non)-$m$-separability  witnesses}

For the more general family of witnesses for detecting non-$m$-separability (and hence an entanglement intactness of $m-1$ or lower), we follow a very similar procedure as that adopted in the last section. Specifically, we first iteratively apply Lemma~\ref{3} and Eq.~\eqref{Eq:f:symmetry} to show that determining the $m$-separable bound amounts to computing $f_{\vec{n}}$ where
\begin{equation}
	\vec{n}=\stackrel{m\,\,\text{times of}\,\,\{1\}}{\overbrace{\{1\}\cdots\{1\}}}.
\end{equation}
Next, if we adopt the generic parametrization of setting $\ket{\phi_i} = \cos\theta_i\ket{0}+{\rm e}^{{\rm i}\varphi_i}\sin\theta_i\ket{1}$, then the computation of  $f_{\vec{n}}$ is equivalent to maximizing the largest eigenvalue of the qubit observable $M_b$, i.e.,
\begin{equation}\label{Eq:Dfn:f1-n}
\begin{aligned}
f_{\{1\}^{\times m}}=\tfrac{1}{2}( \alpha(x+y)+\sqrt{\alpha^2(x-y)^2+4z^2})\\
\end{aligned}
\end{equation}
where
\begin{equation}\label{Eq:xyz}
	x=\prod_{i=1}^{m-1}\cos^2\theta_i,\,\,  y=\prod_{i=1}^{m-1}\sin^2\theta_i, \,\, z=\prod_{i=1}^{m-1}\cos\varphi_i\sin2\theta_i.
\end{equation}
As is evident in Eq.~\eqref{Eq:Dfn:f1-n}, we may, without loss of generality, set $\varphi_i=0$ for all $i$ in our maximization of $f_{\{1\}^{\times m}}$.

When $\alpha \ge  \frac{2^{m-1}}{2^{m-1}-1}$, it can be shown that
\begin{equation}
\begin{aligned}\label{kseparablebound}
f_{\{1\}^{\times m}} \le \alpha, 
\end{aligned}
\end{equation}
whereas for $0<\alpha <  \frac{2^{m-1}}{2^{m-1}-1}$, one has
\begin{equation}
\begin{aligned}\label{kseparablebound2}
f_{\{1\}^{\times m}}
&\le \tfrac{\alpha}{2^{m-1}}+1.
\end{aligned}
\end{equation}
Consequently, the $m$-separable bound is upper bounded by $f_{\{1\}^{\times m}}\le \max\{\alpha,\tfrac{\alpha}{2^{m-1}}+1\}$.

To see that inequality~\eqref{kseparablebound} holds, note from Eq.~\eqref{Eq:Dfn:f1-n} that this inequality is equivalent to:
\begin{equation}
\begin{split}
&\tfrac{1}{2}( \alpha(x+y)+\sqrt{\alpha^2(x-y)^2+4z^2})\le \alpha,\\
\Longleftrightarrow\quad &z^2\le \alpha^2(1-x)(1-y).
\end{split}
\end{equation}

From the non-negativity of $1-x$, $1-y$, the relationship between $x,y$, $z$ given in Eq.~\eqref{Eq:xyz}, and the assumption that $\alpha \ge  \frac{2^{m-1}}{2^{m-1}-1}$, we see that proving
Eq.~\eqref{kseparablebound} for this interval of $\alpha$ amounts to proving
\begin{equation}
\begin{split}
&z^2 \le \left(\frac{2^{m-1}}{2^{m-1}-1}\right)^2(1-x)(1-y),\\
\Longleftrightarrow\quad &2^{m-1}(2^{m-1}-2)xy \le 1-x-y.
\end{split}
\end{equation}

For the convenience of subsequent discussions, let us define
\begin{equation}\label{Eq:xiyi}
	x_i:=\cos^2\theta_i, \quad y_i:=\sin^2\theta_i.
\end{equation}
Using the mathematical identity $  \prod_{i=1}^{m-1} (x_i+y_i)=1$, we can now make both sides of the above inequality a degree $2(m-1)$ homogeneous polynomial in the variables $\{x_i,y_i\}_{i=1}^{m-1}$, namely,
\begin{widetext}
\begin{equation}\label{Eq:Homo}
\begin{aligned}
2^{m-1}(2^{m-1}-2)\prod_{i=1}^{m-1} x_i y_i \le \prod_{i=1}^{m-1}(x_i +y_i)\left[\prod_{j=1}^{m-1}(x_j +y_j)-\prod_{k=1}^{m-1}x_k-\prod_{k=1}^{m-1}y_k\right].
\end{aligned}
\end{equation}
\end{widetext}
Evidently, the polynomial on the RHS of inequality~\eqref{Eq:Homo} consists of $2^{m-1}(2^{m-1}-2)$ monomials, each of degree $2(m-1)$, while the left-hand-side (LHS) consists of $2^{m-1}(2^{m-1}-2)$ times the {\em same} monomial. The key observation leading to the bound given in Eq.~\eqref{kseparablebound} is that when a complementary pair of monomials from the RHS are combined with two of the  monomials from the LHS, one obtains a square of some polynomial. Consequently, after subtracting the LHS from the RHS of Eq.~\eqref{Eq:Homo}, we end up with a sum of squares (SOS) of polynomials, which are necessarily non-negative, thereby showing that the RHS is greater than or equal to the LHS.

To this end, let $\N=\{1,2,\ldots,m-1\}$ denote the set of indices ranging from 1 to $m-1$. Then, it is not difficult to see that all monomials appearing in Eq.~\eqref{Eq:Homo}
take the form
\begin{equation}\label{Eq:g}
	g=\prod_{i\in H} x_iy_i\prod_{j\in \mH^\complement}\xi_j^2
\end{equation}
where $\xi_i$ either equals $x_i$ or $y_i$ for each $i$, $\mH$ is a subset of $\N$ such that for all $i\in\mH$, $g$ is linear in both $x_i$ and $y_i$, while $H^\complement$ is the complement of $\mH$ in $\N$, i.e., the subset of $\N$ such that $g$ is either quadratic in $x_i$ or $y_i$. For example, when using Eq.~\eqref{Eq:g} to express the monomials appearing in the LHS of Eq.~\eqref{Eq:Homo}, we have $H=\N$, or equivalently $H^\complement$ being the empty set.

Let us further define the monomial complementary to $g$ as $\bar{g}:= \prod_{i\in \mH} x_iy_i\prod_{j\in H^\complement}^{m-1}\bar{\xi}_j^2$, where $\bar{x}_i=y_i$ and $\bar{y}_i=x_i$; i.e., $\bar{g}$ is obtained from $g$ by changing each $x_i$ to $y_i$ and vice versa. Subtracting from $g$ any of the monomials appearing in the LHS of Eq.~\eqref{Eq:Homo} gives
$	g - \prod_{i=1}^{m-1} x_i y_i
	=\prod_{i\in H} x_iy_i\prod_{j\in \mH^\complement}\xi_j (\xi_j-\bar{\xi}_j)$.
Similarly, subtracting from $\bar{g}$ any of the monomials appearing in the LHS of Eq.~\eqref{Eq:Homo} gives
	$\bar{g} - \prod_{i=1}^{m-1} x_i y_i=-\prod_{i\in H} x_iy_i\prod_{j\in \mH^\complement}\bar{\xi}_j (\xi_j-\bar{\xi}_j)$.
Combining these expressions
while recalling from Eq.~\eqref{Eq:xiyi} the non-negativity of $x_i,y_i$, we then have
\begin{equation}\label{Eq:Combined}
	g + \bar{g} - 2\prod_{i=1}^{m-1} x_i y_i = \prod_{i\in H} x_iy_i\prod_{j\in \mH^\complement} (\xi_j-\bar{\xi}_j)^2\ge 0,
\end{equation}
where the non-negativity of the overall expression follows from it being the square of some polynomial. To complete the proof, it suffices to note that for all $g$ appearing in the RHS of Eq.~\eqref{Eq:Homo}, $\bar{g}$ also appears on the RHS as one of the $2^{m-1}(2^{m-1}-2)$ monomials. Thus, it follows from Eq.~\eqref{Eq:Combined} that the RHS-LHS of Eq.~\eqref{Eq:Homo} is indeed a SOS and thereby shows the validity of inequality~\eqref{Eq:Homo}, as well as that of Eq.~\eqref{kseparablebound}.
The proof of Eq.~\eqref{kseparablebound2} proceeds analogously to that given above.

Finally, to see that the $m$-separable bound given in Eq.~\eqref{kseparablebound} is tight, it suffices to note that inequality~\eqref{kseparablebound} is saturated when $\sin^2\theta_i=1, \cos^2\theta_i=0 $ (or $\sin^2\theta_i=0, \cos^2\theta_i=1 $) for every $i$. Similarly, the $m$-separable bound given in Eq.~\eqref{kseparablebound2} is tight as the corresponding bound is saturated when  $\sin^2\theta_i=\cos^2\theta_i=1/2 $ for  every $i$.

Note that the non-$m$-separability of a state also gives nontrivial information about its entanglement depth. For example, if a state $\rho$ is 3-separable, then $\rho$ is at least $\lceil\tfrac{n}{3}\rceil$-producible. Likewise, an $m$-separable state is at least $\lceil\tfrac{n}{m}\rceil$-producible. Thus, if the measured value of $s=\ExpVal{\Mz+\Mx}$ for a state is such that $\frac{3}{2}\ge s> \frac{5}{4}$, then the measured state is not 3-separable, but is biseparable; its entanglement depth is thus at least $\lceil\frac{n}{2}\rceil$.

\subsubsection{{White-noise robustness of the non-$m$-separability witnesses for GHZ states}}

In general, one may hope to apply our non-$m$-separability witnesses of Eq.~\eqref{App:eq:separability} to deduce some nontrivial lower bound on the entanglement intactness of any $n$-qubit (entangled) state prepared in the laboratory. In practice, however, as with any other entanglement witnesses, they are not without their limitations. For example, it is easy to verify that the entanglement present in the $n$-qubit $W$ state cannot be certified at all by an evaluation of our witness $\ExpVal{\W^n_{se}(\alpha)}$. Rather, our witness $\ExpVal{\W^n_{se}(\alpha)}$ seems to be better suited for certain classes of states, such as the $n$-partite GHZ state $\ket{\textrm{GHZ}_n}=\frac{1}{\sqrt{2}}(\ket{0}^{\otimes n}+\ket{1}^{\otimes n})$, and their generalization (more on this below). In fact, the proposed witness with any parameter $\alpha\in (0,2]$ can detect the genuine $n$-partite entanglement present in these states.

To see how our witnesses fare in the presence of (white) noise, let us consider the mixed state\footnote{The performance of these witnesses for some other experimentally inspired noise model can be found in Appendix~\ref{noiseappendix}. }
\begin{equation}\label{GHZmixed}
\begin{aligned}
\rho=(1-p_\text{noise}){\proj{\textrm{GHZ}_n}}+p_\text{noise}{\frac{\mathbb{I}_n}{2^n}}.
\end{aligned}
\end{equation}
Evaluating $\ExpVal{\W^n_{se}(\alpha)}$ against this state gives
\begin{equation}\label{Eq:GHZ:ExpVal}
	\alpha \ExpVal{\Mz}+\ExpVal{\Mx}=(\alpha+1)(1-p_\text{noise})+\alpha \frac{p_\text{noise}}{2^{n-1}}.
\end{equation}
Comparing this with the maximal value of $\ExpVal{\W^n_{se}(\alpha)}$ attainable by a {bi}separable state, i.e., $\frac{\alpha}{2} +1$, leads to the identification of
\begin{equation}\label{Eq:GHZ:NoiseRange}
	p_\text{noise}<\frac{\alpha}{2+(2-2^{2-n})\alpha}
\end{equation}
as a genuine multipartite-entangled $\rho$. Evidently, to maximize the right-hand-side of Eq.~\eqref{Eq:GHZ:NoiseRange}, the optimal choice of $\alpha\in(0,2]$ is given by $\alpha=2$. This leads to
\begin{equation}
\begin{aligned}
	p_\text{noise}<\frac{1}{3-2^{2-n}},
\end{aligned}
\end{equation}
which tends to $\frac{1}{3}$, i.e., less than $\frac{1}{2}$---the maximal noise tolerance achievable with the more well-known witness tailored for the GHZ state, $W=\frac{1}{2}-{\proj{\textrm{GHZ}_n}}$.

When the noise parameter $p_\text{noise}$ of $\rho$ in Eq.~\eqref{GHZmixed} increases, the state becomes more segregated, thus showing larger values of entanglement intactness.  For the detection of the non-$m$-separability of these states, we compare instead the expectation value of Eq.~\eqref{Eq:GHZ:ExpVal} with the $m$-separable bound of $\max\{\frac{\alpha}{2^{m-1}}+1,\alpha\}$. Because of the linearity of these expressions in $\alpha$, the optimal choice of $\alpha$ takes place when they are equal, i.e., when $\alpha=\frac{1}{1-2^{1-m}}$, thereby giving an $m$-separable bound of $\frac{2^{m-1}}{2^{m-1}-1}$. Solving for the corresponding threshold noise parameter shows that our witnesses reveal an upper bound on the entanglement intactness of $m-1$ for $\rho$ whenever
\begin{equation}
	p_\text{noise}<\frac{2^m-2}{2 (2^m - 2^{m - n}-1)}.
\end{equation}

\subsubsection{{White-noise robustness of the non-$m$-separability witnesses for GHZ-like states}}\label{App:modification}

Having understood how our witnesses for non-$m$-separability work for GHZ states and their mixture with white noise, we now perform a similar analysis for the generalized GHZ state involving an arbitrary coherent superposition between $\ket{0}^{\otimes n}$ and $\ket{1}^{\otimes n}$. Specifically, let $\ket{\textrm{GHZ}_n(\theta,\phi)}:=\cos\theta\ket{0}^{\otimes n}+{\rm e}^{{\rm i}\phi}\sin\theta\ket{1}^{\otimes n}$ where $\theta\in(0,\tfrac{\pi}{4}]$ while $\phi\in(0,2\pi]$. For simplicity, our discussion here will focus mainly on the detection of genuine $n$-partite entanglement present in (the noisy version of) such states.

To this end, note that for $\ket{\textrm{GHZ}_n(\theta,\phi)}$, the linear combination of expectation values [appearing in $\ExpVal{\W^n_{se}(\alpha)}$ and $\ExpVal{\W^{n'}_{se}(\alpha)}$, see the last paragraph of Sec.~\ref{Structure}] gives
\begin{equation}
	\alpha \ExpVal{\Mz} \pm \ExpVal{\Mx}=\alpha\pm\sin2\theta\cos\phi.
\end{equation}
Clearly, if we take $\alpha=2$, then independent of the value of $\theta\in(0,\tfrac{\pi}{4}]$ and except when $\phi=\tfrac{\pi}{2},\tfrac{3\pi}{2}$, the right-hand-side of the above expression---after maximizing over both signs $\pm$---always exceeds the biseparable bound of $\alpha=2$. Thus, the genuine multipartite entanglement of {\em almost all} $\ket{\textrm{GHZ}_n(\theta,\phi)}$, except when $\phi=\tfrac{\pi}{2},\tfrac{3\pi}{2}$, can be certified via our nonseparability witnesses given in Eq.~\eqref{App:eq:separability}. In fact, even the genuine multipartite entanglement present in the two remaining cases can be taken care of analogously by applying an appropriate local unitary transformation to $\W^n_{se}(\alpha)$. Specifically, for $\phi=\tfrac{\pi}{2},\tfrac{3\pi}{2}$, it suffices to apply the unitary transformation $\id_2^{\otimes n-1}\otimes\left(\begin{smallmatrix} 1 & 0 \\ 0 & \pm i\end{smallmatrix}\right)$ to $\W^n_{se}(\alpha)$, an evaluation of the resulting witness for $\ket{\textrm{GHZ}_n(\theta,\phi=\tfrac{\pi}{2})}$ then gives $\alpha+\sin2\theta$, which always exceeds the biseparable bound of $\alpha$.

To determine the white-noise robustness of our witnesses against these generalized GHZ states, we consider
\begin{equation}
\begin{aligned}\label{Eq:rhomixed:generalizedGHZ}
	\rho=(1-p_\text{noise})\proj{\textrm{GHZ}_n(\theta,\phi)}+p_\text{noise}{\frac{\mathbb{I}_n}{2^n}}.
\end{aligned}
\end{equation}
By a calculation similar to that presented in the last section [but now considering both $\ExpVal{\W^n_{se}(\alpha)}$ and $\ExpVal{\W^{n'}_{se}(\alpha)}$], one finds that the genuine $n$-partite entanglement present in these noisy versions of $\ket{\textrm{GHZ}_n(\theta,\phi)}$ can always be certified as long as
\begin{equation}
	p_\text{noise}<\frac{\sin2\theta|\cos\phi|}{2+\sin2\theta|\cos\phi|-2^{2-n}}.
\end{equation}
Likewise, it can be shown that as long as
\begin{equation}
	p_\text{noise}<\frac{2^n(2^m-2)\sin2\theta|\cos\phi|}{2^n(2^m-2)\sin2\theta|\cos\phi|+2^m(2^n-2)},
\end{equation}
one could certify that the $\rho$ given in Eq.~\eqref{Eq:rhomixed:generalizedGHZ} has an entanglement intactness upper bounded by $m-1$.

\subsection{A family of witnesses for non-$k$-producibility with two local measurement settings}
\label{App:Depth}

Our witnesses for entanglement depth have their origin in the family of device-independent (DI) witnesses for entanglement depth given in Ref.~\cite{PhysRevLett.114.190401}:
\begin{equation}\label{Eq:EDWitnesses}
	\I_n^k(\gamma): \frac{\gamma}{2^n}\sum_{\vec{x}\in\{0,1\}^n} E_n(\vec{x})-E_n(\vec{1}_n)\,  \stackrel{\stackrel{\mbox{\tiny $k$-producible}}{\mbox{\tiny states}}}{\le} \SqkM{k}(\gamma).
\end{equation}
where $\vec{x}=(x_1,x_2,\ldots,x_n)$ is an $n$-component vector describing the combination of measurement settings, $x_i\in\{0,1\}$, $E_n(\vec{x})$ is the $n$-partite full correlator (the expectation value of an $n$-partite $\pm1$-valued outcome observable), and $\SqkM{k}(\gamma)$ is the maximal quantum value of $\I_n^k(\gamma)$ attainable by any $k$-producible state. Some explicit values of these DI $k$-producible bounds (which hold for arbitrary dimensional $k$-producible states and arbitrary local $\pm1$-valued observables) for the case of $\gamma=2$ are~\cite{PhysRevLett.114.190401}: $\SqkM{1}=1$, $\SqkM{2}=\sqrt{2}$, $\SqkM{3}=\tfrac{5}{3}$, $\SqkM{4}=1.8428$ etc.

For an $n$-partite GHZ state, a good choice of local observables inspired by those of Ref.~\cite{PhysRevLett.114.190401} is given by setting $\A_{\pm}= \cos\theta_\pm \sigma_x +\sin\theta_\pm\sigma_y$, where $\theta_\pm\in\mathbb{R}$, while $\A_-$ and $\A_+$ are, respectively, the local observables for $x_i=0$ and $x_i=1$ (for all $i$). In particular, for the 8-partite states that we managed to produce experimentally,  based on the measured values of $\ExpVal{\mathcal M_Z}$ and $\ExpVal{\mathcal M_X}$, our offline numerical optimizations suggest that $\theta_\pm=\tfrac{3(1\pm 8)}{80}$ is a reasonably robust choice for witnessing the entanglement depth. Substituting these into the left-hand-side of Eq.~\eqref{Eq:EDWitnesses} and denoting the global Hermitian observables as $\W^n_{de}$, i.e., $\I_n^k=\tr(\rho\,\W^n_{de})$, we then see that:
\begin{equation}\label{DDInk}
	\W^n_{de}(\gamma)= \gamma\left(\frac{\A_-+\A_+}{2}\right)^{\otimes n}-\A_+^{\otimes n}=\gamma\kappa^n\A-\A_+^{\otimes n},
\end{equation}
where $\A = (\frac{\A_-+\A_+}{2\kappa})^{\otimes n}$ is a $\pm1$-valued Hermitian observable and $\kappa=\cos\tfrac{3}{10}$ is a normalization constant.

As Eq.~\eqref{Eq:EDWitnesses} holds for an arbitrary choice of local observables, we thus see that $\ExpVal{\W^n_{de}(\gamma)}\stackrel{\stackrel{\mbox{\tiny $k$-producible}}{\mbox{\tiny states}}}{\le} \SqkM{k}(\gamma)$ already represents a family of witnesses for entanglement depth. For the specific choice of observables given above, however, these $k$-producible bounds can be considerably tightened via numerical optimizations.

Specifically, our goal is to compute
\begin{equation}\label{Eq:gamma-nk}
	\beta_{8,k}(\gamma)=\max \limits_{\text{$k$-prod.\ }\rho}\tr\left[\rho\, \W^{8}_{de}(\gamma) \right],
\end{equation}
i.e., to optimize the expectation value of $\W^{8}_{de}$ over all possible 8-qubit $k$-producible states. A few simplifications can immediately be made. First, since the objective function $\tr\left[\rho\, \W^{8}_{de}(\gamma) \right]$ is linear in $\rho$, there is no need to consider convex mixtures of $k$-producible 8-qubit states in the optimization. In other words, it suffices to consider $\rho=\otimes_{i=1}^m \rho_i$ where each $\rho_i=\proj{\psi_i}$ is at most $k$-partite. Second, as $\W^{8}_{de}$ is invariant under arbitrary permutation of parties, it suffices to consider one particular partitioning separating the 8 parties into $\lfloor \tfrac{8}{k}\rfloor$ groups of $k$ parties (possibly plus a remaining group of $8\,\mod\,k$ parties).

Even with these simplifications, there is no straightforward way to determine the values of $\beta_{8,k}$, as the characterization of separable states---and, more generally $k$-producible quantum states---is a computationally difficult problem. Instead, we  numerically determine some (matching) upper bound for the $k$-producible bound $\beta_{8, k}$ by employing (and generalizing) the idea of symmetric extension proposed in Ref.~\cite{doherty2004complete} to the present problem.

For example, in order to determine (an upper bound on) the 3-producible bound $\beta_{8, 3}$, it suffices to consider $\rho=\rho_A\otimes\rho_B\otimes\rho_C$ where both $\rho_A$ and $\rho_B$ are three-qubit states and $\rho_C$ is a two-qubit state. Clearly, for all such states, there exists an $(n_1,n_2,n_3)$-copy symmetric extension $\tilde{\rho}$ (e.g., {$\tilde{\rho}=\rho_A^{\otimes n_1}\otimes \rho_B^{\otimes n_2}\otimes \rho_C^{\otimes n_3}$}) such that $\pi \tilde{\rho}\pi =\tilde{\rho}$ and $\tr_{\{A^{\otimes n_1-1}B^{\otimes n_2-1}C^{\otimes n_3-1}\}}\tilde{\rho}=\rho$ where $\pi$ is the projector onto the symmetric subspace of $n_1$ copies of A's Hilbert space, $n_2$ copies of B's Hilbert space, and $n_3$ copies of C's Hilbert space  while $\tr_{\{A^{\otimes n_1-1}\}}\rho$ means a partial trace over $n_1-1$ copies of A's Hilbert space etc. Therefore, a legitimate upper bound on $\beta_{8, 3}$ can be obtained by solving the following semidefinite program:
\begin{equation}\label{Eq:PPTLSE}
\begin{split}
	\max\quad &\tr\left[\rho\, \W^n_{de}(\gamma) \right],\\
	\text{s.t. \ \ \ } & \rho\succeq 0,\quad \tr(\rho)=1,\\
	 & \tilde{\rho}\succeq 0,\quad \tr(\tilde{\rho})=1,\quad \pi\tilde{\rho}\pi=\tilde{\rho}\\
	 & \tilde{\rho}^\text{\tiny T$_j$}\succeq 0\quad\forall\quad j\in \I
\end{split}
\end{equation}
where $\mathcal{O}\succeq 0$ represents the positive-semidefinite requirement of $\mathcal{O}$, $\tilde{\rho}^\text{\tiny T$_j$}$ represents the partial transposition~\cite{peres1996separability} of $\tilde{\rho}$ with respect to subsystem $j$ and $\I$ is the set of indices representing {\em all} possible combinations of varying numbers of copies of A, B, and C's Hilbert spaces. Therefore, if $n_1=n_2=1$ and $n_3=2$, the last line of Eq.~\eqref{Eq:PPTLSE} represents the following set of constraints:
\begin{equation}
	\tilde{\rho}^\text{\tiny T$_A$},\quad \tilde{\rho}^\text{\tiny T$_B$},\quad \tilde{\rho}^\text{\tiny T$_{CC}$},\quad \tilde{\rho}^\text{\tiny T$_{C}$},\quad \tilde{\rho}^\text{\tiny T$_{AC}$},\quad \tilde{\rho}^\text{\tiny T$_{BC}$}.
\end{equation}
In this particular case, the upper bound on $\beta_{8,3}(\gamma=2)$ that we obtained is 1.1699.

In general, the upper bounds that we obtained by solving Eq.~\eqref{Eq:PPTLSE} are not necessarily tight, as the set of $\rho$ that we optimized over is generally a superset of the set of $k$-producible states. In our case, however, we could certify the tightness of these bounds by explicitly parametrizing a general 8-qubit $k$-producible pure state, a general dichotomic qubit observable (3 parameters for every such observable), and applying standard (but heuristic) algorithms to optimize Eq.~\eqref{Eq:gamma-nk} over all these parameters 1000 times for each value of $k$. Our results for these optimizations are summarized in Table~\ref{Tbl:DDBounds} and in Fig.~\ref{fig:k-prod}.

\begin{figure}[t!]
\centering
\includegraphics[width=0.9\columnwidth]{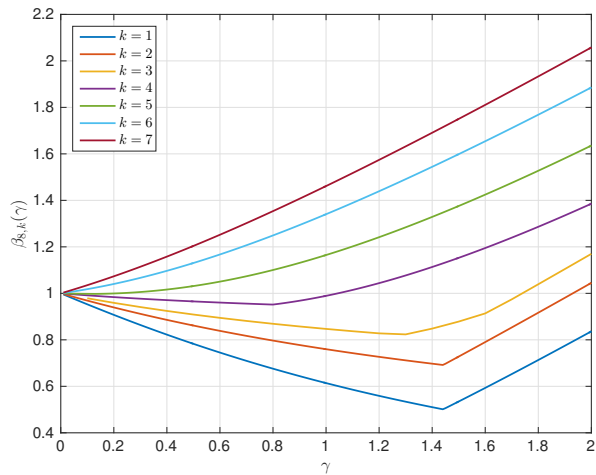}
\caption{\textbf{Numerically determined $k$-producible bounds {$\beta_{8,k}(\gamma)$} for $k=1,2,\ldots,7$ and for all $\gamma\in(0,2]$.}}
\label{fig:k-prod}
\end{figure}

\begin{table}[h!]
\centering
\begin{tabular}{|c|c|c|c|c|}
\hline
$k$ & $\gamma$ & {$\beta_{8,k}(\gamma)$} & Copies & Dimensions \\ \hline
1 & 2 & 0.8365  & $(1,1,2,2,2,2,2)$ & (2,\ldots,2)\\
2 & 2 &  1.0450  &  $(1,1,1,2)$ & (4,4,4,4)\\
2 & 1.6 &  0.7904  &  $(1,1,1,2)$ & (4,4,4,4)\\
3 & 2 &  1.1699  & (1,1,2) & (8,8,4)\\
3 & 1.6 &  0.9137  & (1,1,2) & (8,8,4)\\
4 & 2 &  1.3856  & (1,1) & (16,16) \\
5 & 2 &  1.6357  & (1,1) & (32,8)\\
6 & 2 &  1.8858  & (1,1) & (64,4)\\
7 & 2 &  2.0578  & (1,1) & (128,2) \\
\hline
\end{tabular}
\caption{\textbf{Summary of numerically determined $k$-producible bounds {$\beta_{8,k}(\gamma)$ }for $k=1,2,\ldots,7$ and some specific values of $\gamma$.} The last two columns give, respectively, the number of copies considered for each group and the Hilbert space dimension of each group in our computation of the (matching) upper bound on $\beta_{8,k}(\gamma)$.}\label{Tbl:DDBounds}
\end{table}

\setcounter{figure}{0}
\setcounter{table}{0}
\renewcommand{\theequation}{B\arabic{equation}}
\renewcommand{\thefigure}{B\arabic{figure}}
\renewcommand{\thetable}{B\arabic{table}}

\section{More experimental details and data processing}\label{experimentappendix}

\subsection{Entanglement preparation}\label{sourceappendix}

\begin{figure*}[t!]
\centering
\includegraphics[width=2\columnwidth]{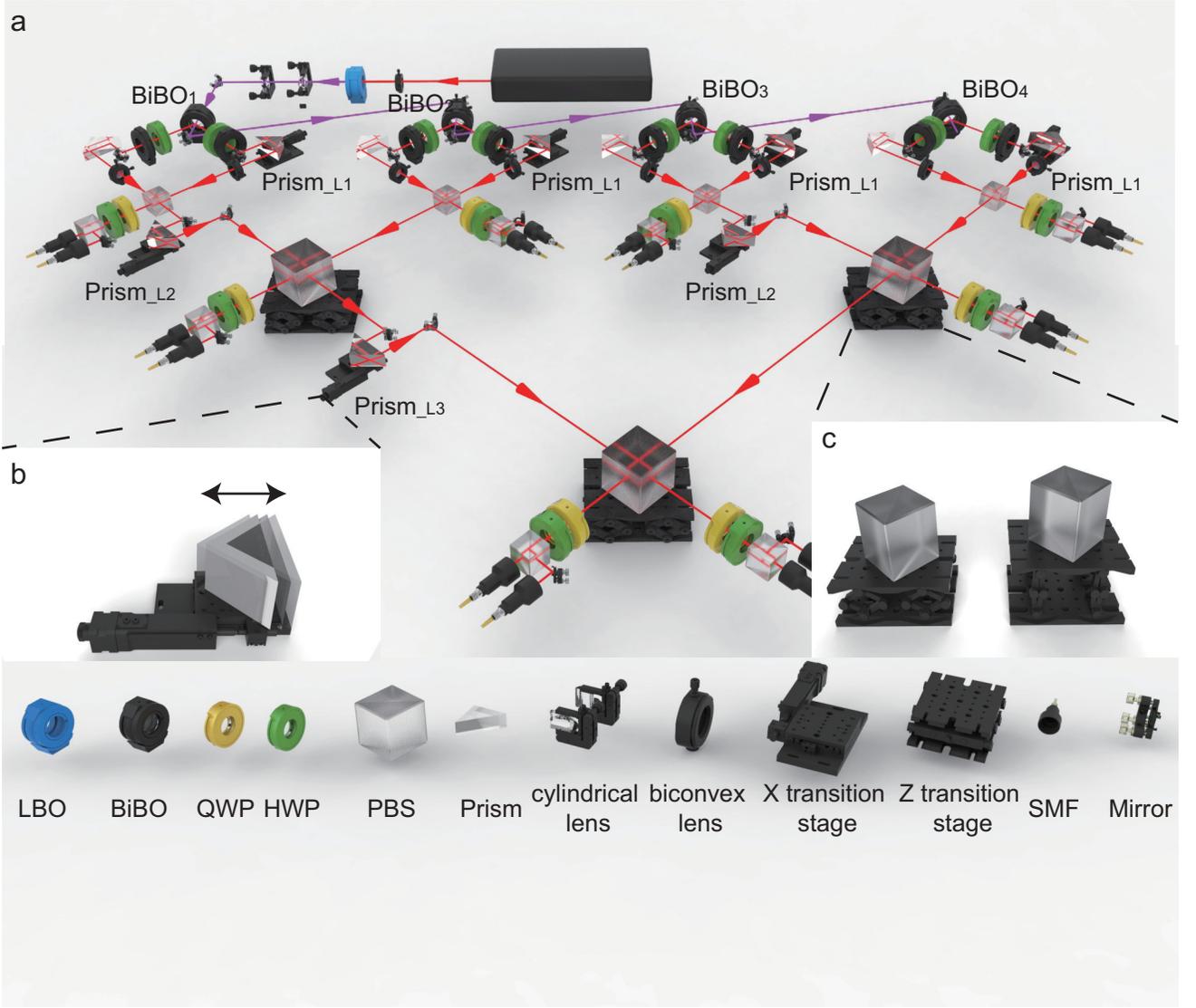}
\caption{\textbf{The detailed experimental setup.}}
\label{fig:setup_details}
\end{figure*}

The detailed experimental setup is shown in Fig.~\ref{fig:setup_details}. A femtosecond pulse, which is with {a} duration of 130 fs, central wavelength of 780 nm, repetition rate of 80 MHz and power intensity of 3.8 W, is focused to an LBO crystal with a waist of 50$\mu$m by a biconvex lens with focal length of 50 mm. As the instantaneous intensity of the focused pulse on LBO is extremely high, we add a Y-direction transition stage under the LBO, which moves 50~$\mu$m every 30 seconds to avoid destroying the LBO crystal. With such a device, we can observe a stable second harmonic generation (SHG) with an efficiency of 42.1\%, i.e., generating ultraviolet pulses with an average power of 1.6W. The optical mode of the generated ultraviolet pulse disperses differently in the x direction and the $y$ direction. To get a good Gaussian mode, we use two cylindrical lenses---one works for the $x$ direction and the other works for the $y$ direction---to reform the beam. After the reforming, the ultraviolet pulse is a Gaussian-like beam focused onto the first BiBO crystal with a beam waist of 170 $\mu$m. We use 0.6~mm BiB$_{3}$O$_{6}$ (BiBO) crystals cut at (111.4$ ^{\circ}$, 55.1 $^{\circ}$) to generate entangled-photon pairs \cite{chen2017observation}, in which case the two cones overlap along two lines separated by an angle of 6.9$^{\circ}$. Compared to the traditional Beta barium borate (BBO) crystals \cite{Kwiat95}, BiBO crystals with these cutting angles are expected to have a smaller spatial walk-off angle and a higher type-II second-order nonlinear coefficient \cite{Halevy11}. Then,  the entangled-photon pairs can be generated with a higher probability, and collected with a higher efficiency. The ultraviolet pulse is directed to the second BiBO crystal (BiBO$_2$), and refocused on the second BiBO$_2$ by a biconvex lens ($f=100$mm) with a beam waist of 170$\mu$m. The same operations are performed successively on the ultraviolet pulse to make it shine on BiBO$_3$ and BiBO$_4$ with the same beam waist of 170$\mu$m. With this choice of beam waist and under such a pumping condition, SPDC on BiBO crystal takes place with a considerable generation rate and good collective efficiency. During our experiment, to suppress the higher-order emission rate in SPDC, we attenuated the average power of ultraviolet pulse to 500 mW, under which we observed 3$\times 10^5$ two-fold coincidences in four entangled photon pairs. The average collection efficiency in modes 1-8 is 39\% with maximal 42\% and minimal 37\%. 

In our experiment, there are four active interferometers at minimum, and seven at most. To make sure that the interfered photons arrive at the polarization beamsplitter simultaneously, we add a motor-controlled prism in one arm of each interferometer [as shown in Fig.~\ref{fig:setup_details}(b)]. The stepping motor moves the prism with a minimum step size of 1$\mu$m, which is good enough to find the biggest visibility in the interference (the coherence length in our case is 200$\mu$m). The seven motor-controlled prisms need to be controlled by a strict sequence. We divide them into three layers: The first layer contains four prisms, which are embedded in the four interferometer generating entangled photon pairs; the second layer contains two prism which are employed in the interferometer to generate four-photon entanglement; the last layer contains one prism in the interferometer to generate eight-photon entanglement. There is no order in controlling the prisms in the same layer, but the controlling order between different layers must follow: layer 1$\rightarrow$ layer 2 $\rightarrow$ layer 3. When collecting data, all seven stepping motors need to be stable for  dozens of hours simultaneously. 

The three PBSs embedded in the interferometric geometry are attached in a lifting platform shown in Fig.~\ref{fig:setup_details}(c). The lifting platform is a $z$ direction transition stage with maximal tuning range of 25~mm, which is larger than our cube (12.7~mm). We emphasize that the prisms in the second and third layers should be adjusted accordingly when generating different entanglement structures.  

The generated photon pairs have correlated polarization. In the ideal scenario, the polarization of these photon pairs is described by the maximally entangled two qubit state $\ket{\Psi^{+}_{ij}}=\frac{1}{\sqrt{2}}(\ket{H_oV_e}+\ket{V_eH_o})_{ij}$, where the subscript $o (e)$ represents the ordinary (extraordinary) component and $i, j$ denote the path label. Then, $\ket{\Psi^{+}_{ij}}$ is overlapped on PBS and becomes $\ket{\Phi^{+}_{ij}}=\frac{1}{\sqrt{2}}(\ket{H_oH_e}+\ket{V_oV_e})_{ij}$. In order to get a better indistinguishability, we filter the photons with proper full width at half of the transmittance maximum (FWHM) depending on whether it is an {\em o}-component  or an {\em e}-component light.

Specifically, in our experiment, the photons in path modes 1, 4, 6, 8 are {\em o}-component light and filtered by a narrow band filter with $\Delta_\text{\tiny FWHM}=4.6$~nm. The photons in path modes 2$^{\prime}$, 3$^{\prime}$, 5$^{\prime}$, 7$^{\prime}$, on the other hand, are {\em e}-component light and are filtered by a narrow band filter with $\Delta_\text{\tiny FWHM}=2.8$ nm. With such filter settings, we observe an eight-fold coincidences of 70/h in creating $\ket{G_{2222}}$. Each interference on PBS$_1$, PBS$_2$ or PBS$_3$ will reduce half of the eight-fold coincidence due to postselecting probability. We experimentally observed that the eight-fold coincidences in creating $\ket{G_{422}}$, $\ket{G_{44}}$, $\ket{G_{62}}$, and $\ket{G_{8}}$ are 36/h, 20/h, 20/h, and 8/h, respectively. The total eight-fold coincidences we collected in measuring $\langle\Mz\rangle)$, $\langle\Mx\rangle$, $\langle\A\rangle$, and $\langle\A^{\prime}\rangle$ are shown in Table~\ref{tab:rawdata}

\begin{table}[h!]
\centering
{
\begin{tabular}{|c|c|c|c|c|}
\hline
 State & $N(\Mz)$ & $N(\Mx)$ & $N(\A)$ & $N(\A^{\prime})$
\\ \hline
$\rho_8$ & 658 & 650 & 92 &  83 \\
$\rho_{62}$ & 260 & 240 & 260 & 176 \\
$\rho_{44}$ & 168 & 208 & 415 & 385  \\
$\rho_{422}$ & 196 & 232& 320 & 290 \\
$\rho_{2222}$ & 253 & 315 & 464 &  412 \\
\hline
\end{tabular}
\caption{Eight-fold coincidences in observing $\Mz$,  $\Mx$, $\A$, and $\A^{\prime}$ on $\rho_{8}$, $\rho_{62}$, $\rho_{44}$, $\rho_{422}$ and $\rho_{2222}$, respectively. The results of the calculated $\langle\Mz\rangle$,  $\langle\Mz\rangle$, $\langle\A\rangle$ and $\langle\A^{\prime}\rangle$ are shown in the main text. }
\label{tab:rawdata}}
\end{table}

\subsection{Imperfections and noise model}\label{noiseappendix}

\begin{figure*}[t!]
\centering
\includegraphics[width=2\columnwidth]{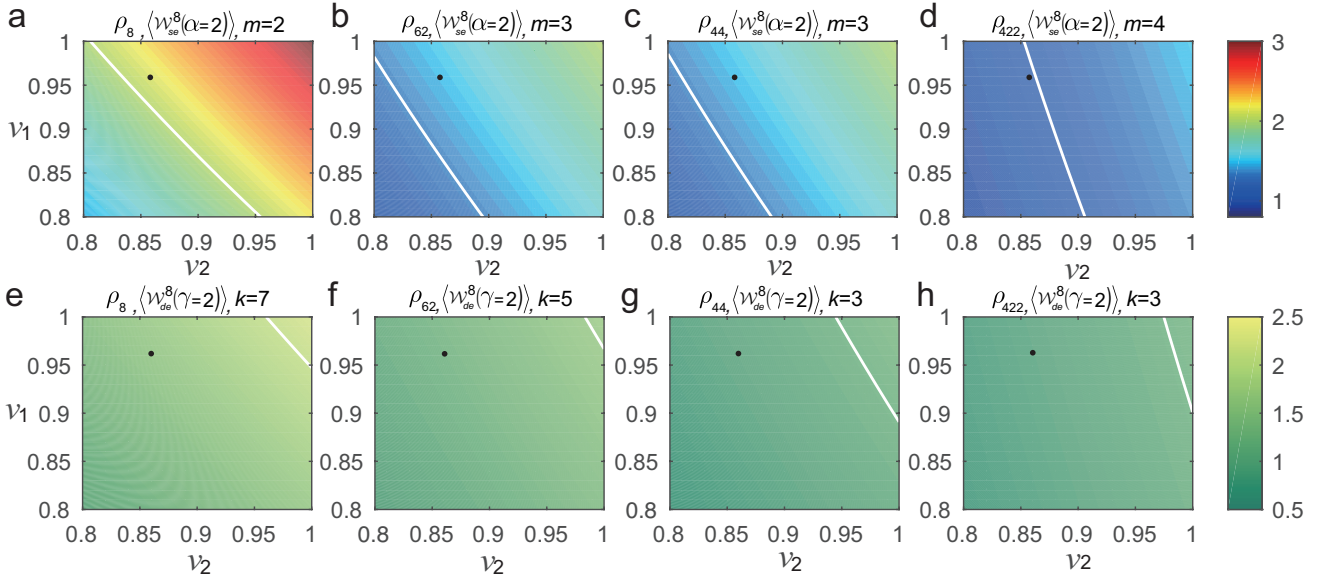}
\caption{\textbf{Calculation of the influence of the visibilities for our measurements of {$\ExpVal{\W_{se}^8(\alpha=2)}$ (top four plots) and $\ExpVal{\W_{de}^8(\gamma=2)}$} (bottom four plots) with respect to $\rho_{8}$, $\rho_{62}$, $\rho_{44}$ and $\rho_{422}$ (from left to right).} The straight white line represents combinations of $v_1$ and $v_2$, where the values of the witnesses for $m$-separability and $k$-producibility (for appropriate values of $m$ and $k$, respectively) are saturated [see Eq.~\eqref{Eq:vis:noise}]. The black circle marks the visibilities observed in our experiment, $v_1=0.967$ and $v_2=0.867$. }
\label{fig:visibility}
\end{figure*}

The experimental imperfections are (mainly) caused by the higher-order emissions in SPDC and the mode mismatch of the interference when superposing photons on PBS to connect entangled photon pairs. The influence of these imperfections can be reflected by the interference visibility. We define the visibility for an experimentally generated state as
\begin{equation}
v=\frac{\text{Target state}-\text{Noisy terms}}{\text{Target state}+\text{Noisy terms}}.
\end{equation}

For the entangled photon pair, the state can be written as
\begin{equation}
\rho_2=\frac{1+v_1}{2}\proj{\textrm{GHZ}_2}+\frac{1+v_1}{2}\frac{\id_2}{2^2},
\end{equation}
where $v_1$ represents the visibility of the entangled photon pair. The imperfections mainly come from the higher-order emission in SPDC. Similarly, the four-photon GHZ state can be written as
\begin{widetext}
\begin{equation}
\rho_4=\frac{1+v_2}{2}\left(\frac{1+v_1}{2}\right)^2\proj{\textrm{GHZ}_4}+\left[1-\frac{1+v_2}{2}\left(\frac{1+v_1}{2}\right)^2\right] \frac{\id_4}{2^4},
\end{equation}
\end{widetext}
where $v_2$ represents the visibility of interference on PBS. As two entangled photon pairs are involved when generating $\rho_4$,  the factor $(\frac{1+v_1}{2})^2$ is added. Generating an $n$-photon GHZ state requires ${n/2}$ entangled photon pairs and ${n/2-1}$ PBSs to connect them. Thus, an $n$-photon GHZ state, where $n$ is an even number, can be written as
\begin{widetext}
\begin{equation}\label{Eq:vis:noise}
\rho_n=\left(\frac{1+v_2}{2}\right)^{\frac{n}{2}-1}\left(\frac{1+v_1}{2}\right)^{\frac{n}{2}}\proj{\textrm{GHZ}_n}+\left[1-\left(\frac{1+v_2}{2}\right)^{\frac{n}{2}-1}\left(\frac{1+v_1}{2}\right)^{\frac{n}{2}}\right]\frac{\id_n}{2^n}.
\end{equation}
\end{widetext}
With this model, we calculate how the visibilities are related to our two witnesses. The calculations are shown in Fig.~\ref{fig:visibility}

The imperfections can also be modeled by noises. Our experimentally prepared $n$-photon state $\rho_{n}(n>2)$ can be represented as follows,
\begin{widetext}
\begin{equation}
{\rho_{n}=(1-\gamma_{d}^{(n)}-\gamma_{w}^{(n)})\proj{\textrm{GHZ}_n}+\frac{\gamma_d^{(n)}}{2}\left[\left(\proj{H}\right)^{\otimes n}+\left(\proj{V}\right)^{\otimes n}\right]+\gamma_w^{(n)}\frac{\id_{n}}{2^n}}.
\label{eq:GHZnoiseexp}
\end{equation}
\end{widetext}

In Eq.~\eqref{eq:GHZnoiseexp}, the first term describes the contribution from a genuine $n$-photon GHZ entangled state. The second term accounts for the imperfection of interference, which occurs with a probability of $\gamma_{d}^{(n)}$, where the polarized beam splitter does not superpose the photons from its two inputs. Experimentally, this is caused by the mode mismatch, including the mismatch of a narrow-band filter, the misalignment of the beams' direction  and other imperfections. The last term in Eq.~\eqref{eq:GHZnoiseexp} represents the higher-order emissions in SPDC processing, which is modeled by the white noise with corresponding probability  $\gamma_{w}^{(n)}$. The model we propose here is consistent with the observation that the expected value of $\mathcal M_Z^{8}$ is considerably than that of $\mathcal M_X^{8}$, which is common in the witness of a genuine multiphoton GHZ state based on the SPDC and photonic interferometer.

According to the model described in Eq.~\eqref{eq:GHZnoiseexp}, we can determine that the amount of noise our measurement of the witnesses $\ExpVal{\W_{se}^8(\alpha=2)}$ and $\ExpVal{\W_{de}^8(\gamma=2)}$ may tolerate with respect to the states $\rho_{8}$, $\rho_{62}=\rho_{6}\otimes\rho_{2}$, $\rho_{44}=\rho_{4}\otimes\rho_{4}$, $\rho_{422}=\rho_{4}\otimes\rho_{2}\otimes\rho_{2}$ and $\rho_{2222}=\rho_{2}\otimes\rho_{2}\otimes\rho_{2}\otimes\rho_{2}$. Note that in the calculation of $\rho_{62}$ and $\rho_{422}$, the contributions {$\gamma_{d}^{(2)}$} and {$\gamma_{w}^{(2)}$} from $\rho_2$ are negligible (and hence ignored) compared to the main contributions {$\gamma_{d(w)}^{(6)}$} and {$\gamma_{d(w)}^{(4)}$}  from $\rho_6$ and $\rho_4$. The calculated results are shown in Fig.~\ref{fig:noise1}.

\begin{figure*}[t!]
\centering
\includegraphics[width=2\columnwidth]{noise1V2}
\caption{\textbf{Calculation of the noise tolerance for our measurement of $\ExpVal{\W_{se}^8(\alpha=2)}$ (top four plots) and $\ExpVal{\W_{de}^8(\gamma=2)}$ (bottom four plots) with respect to $\rho_{8}$, $\rho_{62}$, $\rho_{44}$ and $\rho_{422}$ (from left to right).} The straight line represents combinations of noise parameters where the values of the witnesses for $m$-separability and $k$-producibility (for appropriate values of $m$ and $k$, respectively) are saturated, see Eq.~\eqref{eq:GHZnoiseexp}. Here, $\gamma_w^{(8)}$($\gamma_d^{(8)}$), $\gamma_w^{(6)}$($\gamma_d^{(6)}$), $\gamma_w^{(4)}$($\gamma_d^{(4)}$) denote the probability of white noise (decoherence noise) in preparing $\ket{{\rm GHZ}_8}$, $\ket{{\rm GHZ}_6}$ and $\ket{{\rm GHZ}_4}$. The $\gamma_w^{(2)}$ are negligible compare to $\gamma_w^{(4)}$ and $\gamma_w^{(6)}$, so  we do not consider them in the calculation of $\rho_{422}$, $\rho_{44}$ and $\rho_{62}$. For example, in the rightmost plots (for $\rho_{422}$), the straight line corresponds to the combination of $\gamma^{(4)}_d$ and $\gamma^{(4)}_w$ such that the 4-separable bound (upper plot) and the 3-producible (lower plot) bound are saturated. The black circle marks the noise parameters estimated from our measured value of $\ExpVal{\Mx}$ and $\ExpVal{\Mz}$. }
\label{fig:noise1}
\end{figure*}

\begin{figure}[h!]
\includegraphics[width=0.9\columnwidth]{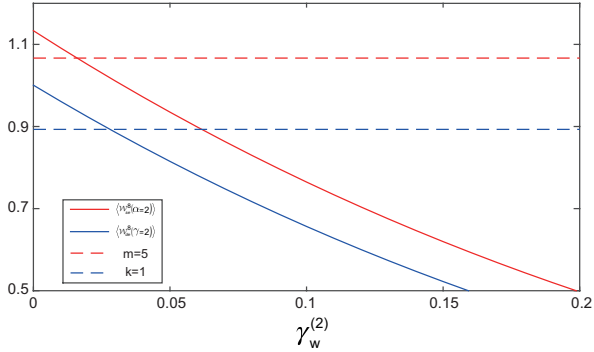}
\caption{\textbf{Calculation of the noise tolerance for our measurement of $\ExpVal{\W_{se}^8(\alpha=2)}$ and $\ExpVal{\W_{de}^8(\gamma=2)}$} on $\rho_{2222}$}
\label{fig:noise2}
\end{figure}

From Fig.~\ref{fig:noise1}, we observe that $\mathcal W_{se}$ can tolerate much more noise than $\mathcal W_{de}$, so experimentally $\mathcal W_{se}$ witnesses more precisely than $\mathcal W_{de}$. We also estimate {$\gamma_{d}^{(n)}$} and {$\gamma_{w}^{(n)}$} of $\rho_{n}$ by the measurements $\Mz$ and $\Mx$, and mark them in Fig.~\ref{fig:noise1} by a black circle. Note that $\gamma_{d}^{(n)}$ and $\gamma_{w}^{(n)}$ are related to $\langle \Mz\rangle$ and $\langle \Mx\rangle$ by
{\begin{equation}
\begin{split}
&\langle \Mz\rangle=\tr(\Mz\rho_{n})=1-\frac{2^{n-1}-1}{2^{n-1}}\gamma_{w}^{(n)},\\
&\langle \Mx\rangle=\tr(\Mx\rho_{n})=1-\gamma_{w}^{(n)}-\gamma_{d}^{(n)}.\\
\end{split}
\label{eq:estimation}
\end{equation}}
By measuring  $\langle \Mz\rangle$, $\langle \Mx \rangle$ and using Eq.~\eqref{eq:estimation}, we can calculate the values of $\gamma_{d}^{(n)}$ and $\gamma_{w}^{(n)}$.

For the state $\rho_{2222}$, the white noise model fits $\rho_{2}$ very well, and there is no interference between independent SPDC processes; thus, we simplify the noise model of Eq.~\eqref{eq:GHZnoiseexp} to consider solely the effect of white noise as $\rho_{2}=(1-\gamma_{w}^{(2)})\proj{\textrm{GHZ}_2}+\gamma_{w}^{(2)}\id_{2}/4$. The calculation results are shown in Fig.~\ref{fig:noise2}. Experimentally, a more accurate estimate of $\gamma_{w}^{(2)}=0.02$ is obtained by performing  tomographic measurements on $\rho_{2}$.\footnote{Strictly speaking, this value of $\gamma_{w}^{(2)}$ corresponds to that obtained by a better fitting of the experimental data to some $\ket{{\rm GHZ}_2(\theta,\phi)}$ (for $\theta\neq\tfrac{\pi}{4}$) state admixed with white noise.} Unlike the case of $\rho_{8}$, $\rho_{62}$,  $\rho_{44}$ and  $\rho_{422}$, $\mathcal W^8_{de}$ could tolerate a little bit more noise than $\mathcal W^8_{se}$ on $\rho_{2222}$. All the calculations hold under the assumption that the collection efficiencies in every mode are the same. The calculations need to be modified when the collection efficiencies are different.

\subsection{Algorithmic procedure to deduce {a minimal} entanglement structure }\label{procedureappendix}

In this section, we give a procedure to systematically deduce {a minimal} entanglement structure by using the results of $\langle \mathcal W_{se}\rangle$ and $\langle \mathcal W_{de}\rangle$. As shown in Fig.~\ref{fig:procedure}, we need to follow three steps to systematically deduce the underlying entanglement structure.

\noindent\textbf{Step 1:} For a given $n$-partite state $\rho$, we check whether it is genuinely $n$-partite entangled or not by measuring the witness $\mathcal W_{se}(\alpha)$. If it is, the task is completed, otherwise, we proceed to step 2.

\noindent\textbf{Step 2:} The extent to which the state is (not) $m$-separable for $m>1$ can be analyzed by using the measured value obtained in step 1 and considering the different $m$-separable bounds given in Eq.~\eqref{App:eq:separability}. Concurrently, we perform the measurement needed to evaluate $\ExpVal{\mathcal W_{de}(\gamma)}$. As with the case of separability, a lower bound on the entanglement depth can be obtained by analyzing the measured value against the various $k$-producible bounds.

\noindent\textbf{Step 3:} Based on the results from step 2, we can conclude that the entanglement intactness and entanglement depth are, respectively, upper and lower bounded by $m\leq M$ and $k\geq K$. From here, based on the data obtained during the measurement of $\ExpVal{\mathcal W^8_{se}(\alpha)}$ (and $\ExpVal{\mathcal W^8_{de}(\gamma)}$),  we may evaluate $\ExpVal{\mathcal W^{n'}_{se}(\alpha)}$ for all combinations of $n'\ge K$  parties to determine which among the $n$ parties exhibit genuine $K$-photon (or more-partite) entanglement, and which exhibit less-partite entanglement.

\begin{figure}[h!]
\includegraphics[width=\columnwidth]{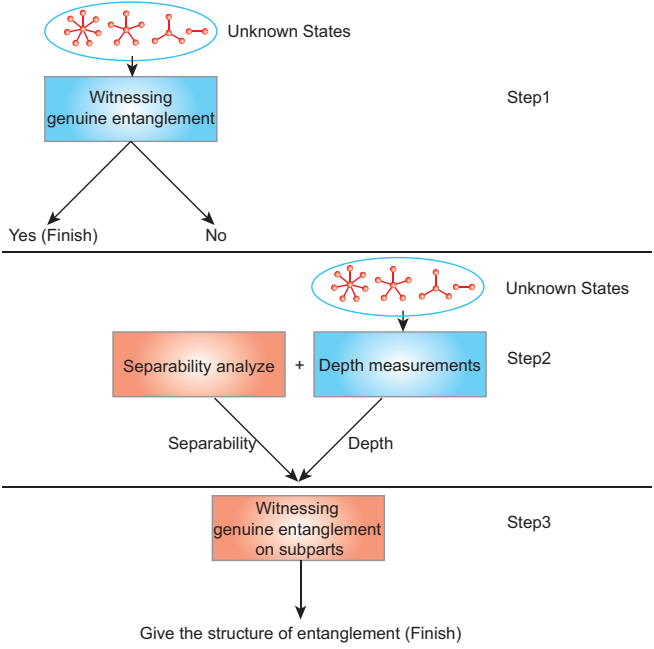}
\caption{\textbf{Procedure to deduce the underlying entanglement structure.}
Each orange box represents steps that only involve classical calculation (i.e., no consumption of any quantum resource is required). The blue box represents steps where both quantum resource and classical calculations are needed.}
\label{fig:procedure}
\end{figure}

\subsection{More experimental results for the deduction of {a minimal} entanglement structure }
\label{App:OtherStructures}

\begin{figure}[h!]
\includegraphics[width=\columnwidth]{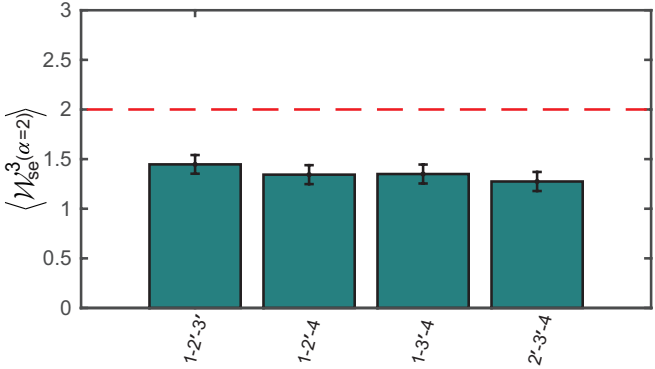}
\caption{\textbf{Estimated value of the three-body entanglement witness $\ExpVal{\mathcal W^3_{se}(\alpha)}$ for \{1, 2$^{\prime}$, 3$^{\prime}$, 4\} of $\rho_{422}$ based on the data acquired during the measurements of $\ExpVal{\mathcal W^8_{se}(\alpha)}$.}}
\label{fig:4223}
\end{figure}

The entanglement structure can be deduced by employing the procedure described in Appendix~\ref{procedureappendix}. We show the results for $\rho_{422}$ and $\rho_{2222}$ in the main text. {For state $\rho_{422}$, we show that photons in path mode \{5, 6$^{\prime}$, 7$^{\prime}$, 8\} are four-photon entangled [Fig.~\ref{fig:data2}(a) in the main text]. We omit the results of searching three-partite GMEs in \{1, 2$^{\prime}$, 3$^{\prime}$, 4\}. Below, we show that we our measurement of $\langle \mathcal W_{se}^{3}(\alpha=2)\rangle$ does not reveal any three-photon entanglement for any possible three-photon combination in  \{1, 2$^{\prime}$, 3$^{\prime}$, 4\}. The results are shown in Fig.~\ref{fig:4223}. We then search for two-partite entanglement. The results are shown in Fig.~\ref{fig:data2}(b) in the main text.}

For state $\rho_{62}$, the measurement result $\langle {{W_{de}^8(\gamma)}}\rangle={1.29\pm0.08}$ indicates that there is at least 4-photon entanglement in $\rho_{62}$.  So, we first try to identify the parties that exhibit this four-photon entanglement in $\rho_{62}$. As shown in Fig.~\ref{fig:62structure}(a), there are 7 four-photon combinations that violate the bipserapable bound of $\ExpVal{W_{de}^4(\gamma=2)}$, therefore indicating the presence of four-photon entanglement among these parties. However, the measured value of $\ExpVal{W_{de}^4(\gamma=2)}$ for the complementary set of parties does not reveal any four-photon entanglement . As $\langle {W_{se}^8(\alpha=4/3)}\rangle={1.43\pm0.07}$ indicates that $m\leq2$ for $\rho_{62}$, these results suggest that $\rho_{62}$ does not have the entanglement structure of $\ket{G_{44}}$. Similarly, the results in Fig.~\ref{fig:62structure}(b) suggest that $\rho_{62}$ does not have the entanglement structure of $\ket{G_{53}}$ either. Rather, an entanglement structure of $\rho_{62}$ that is compatible with our measurement results is that of $\ket{G_{62}}$ [Fig.~\ref{fig:62structure}(c)].

With the same procedure, according to the results $m\leq2$, $k\geq3$ (shown in the main text), we find that $\rho_{44}$ may not have the entanglement structure of $\ket{G_{35}}$. Instead, our results shown in Fig.~\ref{fig:44structure} suggest that one possible entanglement structure of $\rho_{44}$ is that given by $\ket{G_{44}}$.

\begin{figure*}[h!]
\includegraphics[width=1.9\columnwidth]{62structureV2}
\caption{\textbf{Results pertinent to the entanglement structure deduction for $\rho_{62}$}}
\label{fig:62structure}
\end{figure*}

\begin{figure*}[h!]
\includegraphics[width=1.9\columnwidth]{44structure}
\caption{\textbf{Results pertinent to the entanglement structure deduction for $\rho_{44}$}}
\label{fig:44structure}
\end{figure*}

\bibliography{MES}

\end{document}